\newtheorem{theorem}{Theorem}
\newtheorem{lemma}{Lemma}
\theoremstyle{definition}
\newcolumntype{C}{>{$}c<{$}}
\newcommand{\GGG}  {\mathcal{G}}
\newcommand{\ZZ}  {\mathbb{Z}}
\newcommand{\Pdot}  {\Pi_\mathsf{Inner}}
\newcommand{\Pvdot}  {\Pi_\mathsf{InnerVerify}}
\newcommand{\Psha}  {\Pi_{[\cdot]}}
\newcommand{\Pshc}  {\Pi_{\langle\cdot\rangle}}
\newcommand{\Pdabits}  {\Pi_{\mathsf{edaBits}}}
\newcommand{\Pmulv}  {\Pi_\mathsf{MultVerify}}
\newcommand{\Prd}  {\Pi_\mathsf{Reduce}}
\newcommand{\Pple}  {\Pi_\mathsf{PolyEvl}}
\newcommand{\Ptran}  {\Pi_\mathsf{Trans}}
\newcommand{\Pmul}  {\Pi_\mathsf{Mult}}
\newcommand{\Ptrunc}  {\Pi_\mathsf{Trunc}}
\newcommand{\Prec}  {\Pi_\mathsf{Rec}}
\newcommand{\Pbsv}  {\Pi_\mathsf{BIVerify}}
\newcommand{\myhalfbox}[5]{
	\begin{figure}[tpb]
		\centering
		\begin{tikzpicture}
		\node[anchor=text,text width=\columnwidth-1.1cm, draw, line width=1pt, fill=white!5, inner sep=5mm, font=\fontsize{8}{10}\selectfont] (big) {\\#4};
		\node[draw,  line width=.5pt, fill=white!10, anchor=west, xshift=5mm] (small) at (big.north west) {#1};
		\end{tikzpicture}
		\caption{#5}
		\vspace{-1em}
	\end{figure}
}
\begin{document}

\title{The Communication-Friendly Privacy-Preserving Machine Learning against Malicious Adversaries}

\author{Tianpei~Lu,
	Bingsheng~Zhang,~\IEEEmembership{Member,~IEEE,}
	Lichun~Li,
	and~Kui~Ren,~\IEEEmembership{Fellow,~IEEE}
	\thanks{T.~Lu, B.~Zhang and K.~Ren are with The State Key Laboratory of Blockchain and Data Security, Zhejiang University, Zhejiang University, Hangzhou, China.~E-mail:~\{lutianpei, bingsheng, kuiren\}@zju.edu.cn. L.~Li is with Ant group, Hangzhou, China.~E-mali:~lichun.llc@antgroup.com. B.~Zhang is Corresponding Author.}
}
\maketitle
\begin{abstract}
With the increasing emphasis on privacy regulations, such as GDPR, protecting individual privacy and ensuring compliance have become critical concerns for both individuals and organizations. Privacy-preserving machine learning (PPML) is an innovative approach that allows for secure data analysis while safeguarding sensitive information. It enables organizations to extract valuable insights from data without compromising privacy. Secure multi-party computation (MPC) is a key tool in PPML, as it allows multiple parties to jointly compute functions without revealing their private inputs, making it essential in multi-server environments. We address the performance overhead of existing maliciously secure protocols, particularly in finite rings like $\mathbb{Z}_{2^\ell}$, by introducing an efficient protocol for secure linear function evaluation. We implement our maliciously secure MPC protocol on GPUs, significantly improving its efficiency and scalability. We extend the protocol to handle linear and non-linear layers, ensuring compatibility with a wide range of machine-learning models. Finally, we comprehensively evaluate machine learning models by integrating our protocol into the workflow, enabling secure and efficient inference across simple and complex models, such as convolutional neural networks (CNNs).
\end{abstract}


\section{Introduction}
In the era of big data, privacy protection and compliance have become paramount concerns for both individuals and organizations. As various privacy regulations, such as GDPR, have emerged, the demand for effective privacy-preserving mechanisms has intensified significantly. Privacy-preserving machine learning (PPML) is an innovative technique that enhances privacy while enabling secure data mining and machine learning. It ensures that sensitive information remains confidential, allowing organizations to leverage data insights without compromising individual privacy.

Secure multi-party computation (MPC)~\cite{mpc1,mpc2,mpc3} allows multiple parties to jointly evaluate functions without revealing their private inputs. This cryptographic tool plays a crucial role in realizing PPML in multi-server environments~\cite{aby3, BLAZE, FLASH, Chameleon, epic, FALCON}. Notably, this work focuses on 3-party MPC, referred to as 3-PC. Most existing protocols~\cite{SecureNN, ASTRA} are designed for a semi-honest setting, where participants are assumed to adhere to the protocol and act honestly, albeit with the potential to glean additional information from the data they handle. However, in many scenarios, the importance of robust defenses against malicious actors becomes critical. Maliciously secure protocols are essential in these contexts, as they can detect adversarial behaviors and protect the integrity of the computation.

Despite the advancements, state-of-the-art maliciously secure PPML protocols face significant performance overhead. For instance, maliciously secure multiplication protocols can be at least twice as slow as their semi-honest counterparts~\cite{swift, FastRing}. This performance gap raises concerns, especially given that PPML-friendly MPC protocols typically operate over finite rings like $\ZZ_{2^\ell}$, which facilitate fixed-point arithmetic. Designing maliciously secure MPC over $\ZZ_{2^\ell}$ is inherently more complex than over prime-order finite fields $\ZZ_p$.

Recently, several works~\cite{GS20, TurboPack, nv18} have successfully implemented efficient maliciously secure protocols over $\ZZ_p$. However, techniques used to achieve malicious security in $\ZZ_p$ cannot be directly applied to $\ZZ_{2^\ell}$ due to the absence of inverses for certain elements. Attempts to adapt these techniques have resulted in protocols that incur a twofold communication overhead. Alternatively, some research efforts~\cite{BLAZE, swift, FastRing} aim to develop maliciously secure MPC over $\ZZ_{2^\ell}$ from the ground up. Nonetheless, these solutions often generate significantly higher communication overhead compared to semi-honest protocols. This performance loss is particularly troubling in today’s economic landscape, where communication costs on platforms like Amazon can far surpass computation costs, underscoring the urgent need for efficient, secure protocols that balance both privacy and performance.

\smallskip
\noindent\textbf{Our results.}
In this work, we improve the performance of maliciously secure linear functions evaluation for enhanced PPML.  Our protocols are based on 3-party MPC in the honest majority setting. The underlying share of our 3-PC protocol originates from a variant of the replicated secure sharing (RSS)~\cite{ASTRA}; that is, to share $x\in\ZZ_{2^\ell}$, $P_0$ holds $(r_1,r_2)$, $P_1$ holds $(m = x - r, r_1)$, and $P_2$ holds $(m = x- r, r_2)$ where $r=r_1+r_2$. 

Analogously, for the malicious multiplication, the parties first invoke the semi-honest multiplication protocol and perform a batch verification at the end. Goyal {\em et al.}~\cite{GS20} proposes a technique that can transfer the verification of $N$ dimension inner product triple to the verification of $N/2$ dimension inner product with constant overhead. However, Goyal {\em et al.}~\cite{GS20} works on Shamir's secret sharing, which is performed over a prime-order field, naively converting their protocol to the ring setting could cause the soundness issue. Also, as mentioned above, the techniques~\cite{Yet, spdz2k, Brain} to adopt the multiplication verification over the field to the ring are not suitable for the protocol proposed in \cite{GS20}. To resolve the soundness issue, we extend the shared elements over $\ZZ_{2^\ell}$ to the quotient ring of polynomials $\ZZ_{2^\ell}[x]/f(x)$~\cite{BBC,BGIN,BGIN2}, where $f(x)$ is a degree-$d$ irreducible polynomial over $\ZZ_{2^\ell}$ to apply the Lagrange interpolating based dimension reduction technique~\cite{GS20}. Consequently, the overall communication of our batch multiplication verification protocol is logarithmic to the number of multiplication gates.

Our protocols are compatible with mixed-circuit computation. Previous research~\cite{TASTY, SecureML, aby3, Trident} has shown that computing non-linear functions, such as comparison, is more efficient in binary computation. This necessitates switching between arithmetic and binary computation, as arithmetic is superior for dot products. Rotaru and Wood introduced the concept of double-authenticated bits (daBits)~\cite{dabits}, which are secret random bits shared across both arithmetic and binary. We observe that our protocol can be directly applied to daBits with minimal modifications. By utilizing daBits, we enable secure evaluation of any non-linear function under malicious security.

Finally, we integrated both linear and non-linear functions to systematically evaluate machine learning models.

\begin{table*}[tbh!]
\centering
\caption{Comparison of 3-PC based PPML. ($\ell$ is the  ring size, $n$ is the size of the inner product.)}

\newcolumntype{Y}{>{\raggedleft\arraybackslash}X}
\begin{tabularx}{\textwidth}{X|X|XXX|X}
\toprule
\multirow{2}{*}{Operation}      & \multirow{2}{*}{Protocol} & Offline       & \multicolumn{2}{c|}{Online} & \multirow{2}{*}{Malicious} \\ \cmidrule(r){4-5} 
                                &                       & Communication (bits) & Rounds     & Communication (bits)     &                           \\ \midrule
\multirow{4}{*}{Mult} &  ABY3\cite{aby3}                & $12\ell$        &     $1$       &    $9\ell$   &   $ \checkmark $     \\
				    & BLAZE\cite{BLAZE}        & $3\ell$        &     $1$       &    $3\ell$   &    $\checkmark$      \\
                                &      SWIFT\cite{swift}     &   $3\ell$  &   $1$    &    $3\ell$     &      $\checkmark$             \\
                                &        \textbf{Ours}      &   $1\ell$   &  $1$    & $2\ell$     &      $\checkmark$       \\ \midrule
\multirow{4}{*}{Inner Product} & ABY3\cite{aby3}                  & $12n\ell $        &     $1$       &    $9n\ell$   &    $\checkmark$      \\ 
				& BLAZE\cite{BLAZE}                 & $3 n \ell $        &     $1$       &    $3\ell$   &    $\checkmark$      \\
                                &    SWIFT\cite{swift}      &   $3 \ell$  &   $1$    &    $3\ell$     &      $\checkmark$             \\
                                &        \textbf{Ours}      &   $1 \ell$   &  $1$    & $2\ell$     &      $\checkmark$       \\ \midrule
\multirow{4}{*}{\begin{tabular}[l]{@{}l@{}}Inner Product\\ with \\ Trunction\end{tabular} } & ABY3\cite{aby3}                   & $12 n\ell +84\ell$        &     $1$       &    $9 n\ell +3\ell$   &    $\checkmark$      \\ 
				& BLAZE\cite{BLAZE}                  & $3 n\ell + 2\ell $        &     $1$       &    $3\ell$   &    $\checkmark$      \\
                                &      SWIFT\cite{swift}      &   $15\ell$  &   $1$    &    $3\ell$     &      $\checkmark $            \\
                                &        \textbf{Ours}      &   $7\ell$   &  $1$    & $2\ell$     &      $\checkmark $      \\ \bottomrule
\end{tabularx}
\label{tab:cmppr}
\vspace{-1em}
\end{table*}

\smallskip
\noindent\textbf{Performance.} Table~\ref{tab:cmppr} depicts the  comparison between our protocols and  SOTA 3PC maliciously secure protocol. As we can see, our protocols achieve a significant communication reduction.

 \emph{\underline{Batch verification for multiplication over the ring.}} Compared with the prime-order finite field, constructing an MPC over ring $\ZZ_{2^\ell}$ against malicious adversaries typically incurs a higher overhead. In this work, we propose a new maliciously secure 3PC multiplication protocol over ring $\ZZ_{2^\ell}$ with a logarithmic communication overhead during batch verification. 
We conduct benchmarks on the overhead ratio of the verification step. 
By employing this technique,  the amortized communication cost of our maliciously secure multiplication is merely $2$ ring elements in the online phase and $1$ ring element in the offline phase per operation. 

Compared with SOTA  maliciously secure MPC multiplication over ring proposed by Dalskov {\em et al.}~\cite{FastRing}, our protocol reduces the overall communication by 40\%. Note that Dalskov {\em et al.}~\cite{FastRing} achieves full security in the $\mathcal{Q}^3$ active adversary setting ($t<n/3$), while our protocol achieves security with abort in the $\mathcal{Q}^2$ active adversary setting ($t<n/2$),  where $t$ is the number of corrupted parties and $n$ is the total number of participants.
Compared with SOTA 3PC multiplication over ring~\cite{swift}, our protocol reduces the communication by 33\% in the online phase and 67\% in the offline phase, respectively. Similarly, the communication of our inner product protocols is also 50\% of that in SWIFT~\cite{swift}.

\emph{\underline{Implementation with GPUs.}} Since our implementation requires converting secret sharing to an extended ring during the verification phase, this introduces significant computational overhead. However, the extended ring offers excellent concurrency, allowing us to implement our protocol on GPUs. In our specific experiments, compared to ABY3, our implementation achieved a threefold performance improvement, and when compared to Swift, we realized a twofold increase in performance.

\emph{\underline{Implementation of maliciously secure PPML framework.}} We built a comprehensive privacy-preserving machine learning application against malicious adversaries based on Piranha~\cite{piranha} framework. This includes the implementation of typical CNN models such as VGG and ResNet. Our framework delineates between semi-honest offline and online computation phases, as well as a separate multiplication gate (for both arithmetic and boolean) verification phase. Our experiments demonstrate that the time overhead of the verification phase is significantly lower than that of the online computation phase, indicating that the time introduced by malicious security is far less than the original cost of the semi-honest protocol.

\noindent\textbf{Paper Organization.}  We first propose our maliciously secure 3PC in Sec.~\ref{sec:tomal}. In Sec.~\ref{apparithmetic}, we realize the PPML framework based on our maliciously secure protocols for both linear and non-linear operation. In Sec.~\ref{imp_ben}, we benchmark the performance of our protocols and PPML framework.


\section{Preliminaries}


\noindent\textbf{Notation.} Let $\mathcal{P}:=\{P_0,P_1,P_2\}$ be the three MPC parties.
 During the PPML execution, we encode the float numbers as fixed-point structure~\cite{aby3, BLAZE}: for a fixed point value $x$ with $k$-bit precision, if $x\geq 0$, we encode it as $\lfloor x\cdot 2^k \rfloor$; if $x<0$, we encode it as $2^\ell + \lfloor x\cdot 2^k \rfloor$. 
 We use $\eta_{j,k}$ to denote the common seed held by $P_j$ and $P_k$. Our protocol contains two types of secret sharing as follows:
\begin{itemize}
\item $[\cdot]^\ell$-sharing: We define $[\cdot]^\ell$-sharing over ring $\ZZ_{2^\ell}$ as  $[x]^\ell := ([x]_1\in \ZZ_{2^\ell}, [x]_2\in \ZZ_{2^\ell})$ where $x = [x]^\ell_1 + [x]^\ell_2$. $P_j$ for $j\in\{1,2\}$ hold share $[x]^\ell_j$.
\smallskip
\item $\langle\cdot\rangle^\ell$-sharing: We define $\langle \cdot \rangle^\ell$-sharing over ring $\ZZ_{2^\ell}$ as  $\langle x\rangle^\ell := ([r_x]^\ell, m_x)$ where $r_x$ is a fresh random value and $m_x = r_x + x$. $P_j$ for $j\in\{1,2\}$ hold $(m_x\in \ZZ_{2^\ell}, [r_x]^\ell_j \in \ZZ_{2^\ell} )$ and $P_0$ holds  $([r_x]^\ell_1, [r_x]^\ell_2 )$.

\end{itemize}

 We use $[\cdot]^{\ell[x]}$ and $\langle\cdot\rangle^{\ell[x]}$ to denote the share in the polynomial ring $\ZZ_{2^\ell}[x]/f(x)$ where $f(x)$ is a degree-$d$ irreducible polynomial over $\ZZ_2$. 
 For simplicity, we use $[\cdot]$, $\langle \cdot \rangle$ when semantics are clear.

All the aforementioned secret-sharing forms have the linear homomorphic property, i.e., $[x]+[y] = ([x]_1 + [y]_1, [x]_2 + [y]_2)$ and $c\cdot[x] = (c\cdot [x]_1, c\cdot [x]_2)$ and $[x]+c = ([x]_1 + c, [x]_2)$, where $c$ is a public value. The same linear operation holds for $\langle \cdot \rangle$, and $\langle\cdot\rangle^{\ZZ_{2^\ell}[x]}$.

\smallskip
\noindent\textbf{Secret sharing.} 
Let $\Psha$ and $\Pshc$ denote the corresponding secret-sharing protocols.  By $\Psha(x)$, we mean that $x$ is shared by $P_0$; by $\Psha$, we mean the parties jointly generate a shared random value.
We utilize pseudo-random generators (PRG) to reduce the communication~\cite{YAS}. 
In our protocol description, when we let parties $P_j$ and $P_k$ pick random values together, we mean that these parties invoke PRG with seed $\eta_{j,k}$. 
The brief sketch of secret sharing schemes is as follows.

\begin{itemize}
\item $[x]^\ell \leftarrow \Psha^{\ell}(x)$: (Generate shares of $x$.)

\quad - $P_0$ and $P_1$ pick random value $[x]_1\in \ZZ_{2^\ell}$ with seed $\eta_{0,1}$; 

\quad - $P_0$ sends $x_2 = x - [x]_1 \pmod {2^\ell}$ to $P_2$.

\item  $[x]^\ell \leftarrow \Psha^{\ell}$:  (Generate shares of a random value.)

\quad - $P_0$ and $P_1$ pick random value $[x]_1\in \ZZ_{2^\ell}$ with seed $\eta_{0,1}$; 

\quad - $P_0$ and $P_2$ pick random value $[x]_2\in \ZZ_{2^\ell}$ with seed $\eta_{0,2}$; 

\quad - $P_0$ calculates $x = [x]_1 + [x]_2$.

\item $\langle x \rangle^\ell \leftarrow \Pshc^{\ell, k}(x)$:  (Generate shares of $x$.)

\quad - All parties perform $[r_x] \leftarrow \Psha$ in the offline phase, and $P_k$ holds both seeds of $[r_x]_1$ and $[r_x]_2$ generation; 

\quad - $P_i$ send $m_x = x + [r_x]_1 + [r_x]_2$ to $P_1$ and $P_2$. 

\item $\langle x \rangle^\ell \leftarrow \Pshc^\ell$:   (Generate shares of a random value.)

\quad - All parties perform $[r_x] \leftarrow \Psha$ in the offline phase; 

\quad - $P_1$ and $P_2$ pick random value $m_x$ with seed $\eta_{1,2}$.  

\end{itemize}
$\Psha$ and $\Pshc$ also work for the share $[\cdot]^{\ell[x]},\langle\cdot\rangle^{\ell[x]}$ over the polynomial ring $\ZZ_{2^\ell}[x]/f(x)$, which are denoted as $\Psha^{\ell[x]}$, $\Pshc^{\ell[x]}$. 
%
%
%
%

\smallskip
\noindent\textbf{Verifiability of share reconstruction.} We note that the shared form $\langle \cdot \rangle$ has the verifiable reconstruction property against a single malicious party. To be precise, for shared value, $\langle x \rangle$, a single active adversary cannot deceive the honest parties into accepting an incorrect reconstruction result $x + e$ with a non-zero error $e$. This is because any two honest parties can collaboratively reconstruct the secret, and invalid shares will be detected by the honest parties.

Formally, the verifiable reconstruct protocol $\Prec$ is described as follows:
\begin{itemize}
\item $x \leftarrow \Prec(\langle x \rangle)$: 

\quad - $P_0$ sends $[r_x]_1$ to $P_2$ and $[r_x]_2$ to $P_1$;

\quad - $P_1$ sends $m_x$ to $P_0$ and $H([r_x]_1)$ to $P_2$; 

\quad - $P_2$ sends $H(m_x)$ to $P_0$ and $H([r_x]_2)$ to $P_1$; 

If the received messages from the other parties are inconsistent, $P_i$ output abort. Otherwise $P_i$ output $x = m_x - [r_x]_1-[r_x]_2$. 
\item $x \leftarrow \Prec^{\ell, k}(\langle x \rangle)$: All parties send their shares (or the hash value) to $P_k$. If the received messages from the other parties are inconsistent, $P_k$ output abort. Otherwise $P_k$ output $x = m_x - [r_x]_1-[r_x]_2$.
\end{itemize}
For the share $\langle\cdot\rangle^{\ell[x]}$ in polynomial ring, $\Prec^{\ell[x]}$ works analogously as the above. 

\smallskip
\noindent\textbf{Preprocessing and postprocessing.}
We follow the ``preprocessing" paradigm \cite{BDOZ}, which splits the protocol into two phases: the preprocessing/offline phase is data-independent and can be executed without data input, and the online phase is data-dependent and is executed after data input.  Specifically, all the items $r_x$ of share $\langle x \rangle$ of our protocols can be generated in the circuit-depend offline phase. What the parties need to do in the online phase is to collaborate in computing $m_x$ for $P_1$ and $P_2$. To achieve malicious security, we further introduce the postprocessing phase \cite{Brain}, where batch verification is performed.

\smallskip
\noindent\textbf{Multiplication gate.}
We adopt the multiplication protocol of ASTRA\cite{ASTRA}. 
For multiplication $z = x\cdot y$ with input $\langle x\rangle$, $\langle y\rangle$ and output $\langle z \rangle$, all parties first generate $[r_z] \leftarrow \Psha(r_z)$ for the output wire in the offline phase. To calculate $m_z$ for $P_1$ and $P_2$ in the online phase, it can be written as
\begin{equation}
\begin{aligned}
	m_z = xy + r_z &= (m_x - r_x)(m_y - r_y) + r_z \\
	&=\overbrace{m_x m_y - m_x r_y - m_y r_x}^{P_1 \text{ and } P_2 \text{ can locally evaluate} } + \overbrace{r_x r_y  +r_z}^{\text{Known to } P_0}   \enspace . \nonumber
\end{aligned}
\end{equation}
$[\Gamma'] = m_x m_y - m_x [r_y] - m_y [r_x]$ can be calculated by $P_1$ and $P_2$ locally and $[\Gamma] = [r_x \cdot r_y]  -[r_z]$ can be secret shared by $P_0$ to $P_1$ and $P_2$ in the preprocessing phase. In the online phase, $P_1$ and $P_2$ calculate and reconstruct $[m_z] = [\Gamma'] + [\Gamma]$.

\noindent\textbf{Inner product.} Given an arbitrary dimension inner product, its communication cost equals to a single multiplication. Considering $n$-dimension inner product $z = \sum^{n-1}_{i = 0} x_i \cdot y_i$, the artifact $m_z$ requires to be evaluated in online phase can be written as

\begin{equation}
\begin{aligned}
	m_z &= \sum^{n-1}_{i = 0} x_i \cdot y_i + r_z = \sum^{n-1}_{i = 0}(m_{x_i} - r_{x_i})(m_{y_i} - r_{y_i}) + r_z \\
	&=\overbrace{\sum^{n-1}_{i = 0}(m_{x_i} m_{y_i} - m_{x_i} r_{y_i} - m_{y_i} r_{x_i})}^{P_1 \text{ and } P_2 \text{ can locally evaluate} } + \overbrace{\sum^{n-1}_{i = 0} r_{x_i} r_{y_i}  +r_z  }^{\text{Known to } P_0} \enspace . \nonumber
\end{aligned}
\end{equation}
Similar to single multiplication, $[\Gamma'] = \sum^{n-1}_{i = 0}(m_{x_i} m_{y_i} - m_{x_i} [r_{y_i}] - m_{y_i} [r_{x_i}])$ an be locally evaluated by $P_1$ and $P_2$. Meanwhile, $[\Gamma] = \sum^{n-1}_{i = 0} [r_{x_i} r_{y_i}]  +[r_z]$ can be secret shared by $P_0$ to $P_1$ and $P_2$ in the offline phase. In the online phase, $P_1$ and $P_2$ compute $[m_z] = [\Gamma] + [\Gamma']$ and reconstruct $m_z$.

\smallskip
\noindent\textbf{Security up to additive attacks.} A protocol is secure up to additive attacks when all behaviors the adversary performs can only introduce an additive error known to the adversary to the output of the protocol.
As proven in \cite{RSSadditive}, the typical replicated secret sharing protocol, such as aforementioned multiplication and inner product,  is secure up to additive attacks against malicious adversaries, i.e.,  the adversary’s cheating ability is limited to introducing an additive error to the output.

\smallskip
\noindent\textbf{Security Model.} Our protocol and framework achieve active security with abort in an honest majority setting, while one arbitrary party in $\mathcal{P}$ is under the control of a static malicious adversary. We emphasize abort security with computational soundness, ensuring that malicious behavior will be detected with overwhelming probability.

\section{3PC with Malicious Security}\label{sec:tomal}

We use the postprocessing verification procedure to detect any potential malicious behavior. Before reconstructing the final result, an extra verification is performed to ensure the correctness of the final result. Our maliciously secure protocol is based on the additive security of RSS, namely, the corresponding protocol is secure up to additive attacks. 

\noindent \textbf{Correctness Verification for Arithmetic Circuit.}
\noindent For a circuit containing both multiplication and addition gates, the correctness verification of the overall circuit using 3PC replicated shares reduces to verifying all multiplication gates. When an adversary introduces an error at an addition gate, since addition is non-interactive, it will cause an inconsistency in the shares. As previously mentioned, replicated shares possess a verifiable reconstruction property against a single malicious party. In the multiplication operation $z = x \cdot y$, $P_0$ can introduce an error when sharing $[r_x \cdot r_y]$, while $P_1$ and $P_2$ can introduce errors during the reconstruction of $m_z$, without breaking share consistency. Denoting the set of multiplication gates by $\GGG$, the verification checks the following equation:

\begin{equation}
\begin{split}
 \bigwedge_{\{x^{(i)}, y^{(i)}, z^{(i)}\} \in \GGG} x^{(i)} \cdot y^{(i)} &= z^{(i)} 
\end{split}
\label{eq: Matmul}
\end{equation}

To batch verify multiple multiplication gates ${\langle x^{(i)} \rangle, \langle y^{(i)} \rangle, \langle z^{(i)} \rangle}_{i \in |\GGG|}$, we verify that the following inner product equals zero:

\begin{equation}
\begin{split} 
\Delta = \sum^{|\GGG|}_{i = 0} ( r^i \cdot x^{(i)} \cdot y^{(i)}  - r^i \cdot z^{(i)} ) = 0
\end{split}
\label{eq: Matmul}
\end{equation}
where $r$ is a challenge picked during verification. The terms $r^i$ prevent an adversary from introducing opposing errors in different outputs $z_i$ and $z_j$ that could cancel each other. For example, if $z^{(i)} = x^{(i)} \cdot y^{(i)} + e$ and $z^{(j)} = x^{(j)} \cdot y^{(j)} - e$, then $z^{(i)} + z^{(j)} = x^{(i)} \cdot y^{(i)} + x^{(j)} \cdot y^{(j)}$, making the error undetectable.

However, directly evaluating the inner product poses challenges. One challenge is that the adversary, knowing the additive error in $\langle z^{(i)} \rangle$, could cancel out the error to fabricate $\Delta = 0$. A typical solution involves using a random factor $\alpha$. Instead of the 2-degree inner product, verification becomes a 3-degree polynomial:

\begin{equation}
\begin{split} 
\Delta = \sum^{|\GGG|}_{i = 0} ( r^i \cdot \alpha \cdot x^{(i)} \cdot y^{(i)}  - r^i \cdot \alpha \cdot z^{(i)} ) = 0
\end{split}
\label{eq: Matmul}
\end{equation}
where $\alpha$ is a random share unknown to each party. 
This randomness $\alpha$ serves as an additional layer of security by making it difficult for a malicious adversary to manipulate the values of the inputs and outputs in a way that cancels out errors introduced during verification.
If the evaluation of this 3-degree polynomial is secure against additive attacks, the adversary can only introduce an input-independent error $e'$ in $\Delta$. To cancel the original error $e$ in $z^{(i)}$, the adversary must guess $e' = \alpha \cdot e$. Since $\alpha$ is unknown and chosen randomly, the probability of correctly guessing the exact value of $\alpha \cdot e$ is extremely low.

\noindent \textbf{Ring-Specific Challenges.} The second challenge comes from irreversible multiplication in the ring.
In ring-based computations, particularly over modular arithmetic, certain errors can exploit the properties of the ring to bypass verification. For instance, an adversary could introduce a specific error $e$ such that when multiplied by $r^i$, it results in zero within the ring, even though the error itself is non-zero. Such chosen $e$ will be undetected in a high probability if a lot of values $\alpha$ meets $e \cdot \alpha = 0$.
A typical attack could involve introducing an error $e = 2^{\ell-1}$, where $\ell$ is the bit length of the ring. If $r$ is an even number, this error would result in $r^i \cdot (z^{(i)} + e) = r^i \cdot z^{(i)}$, passing verification with a probability of 1/2. 

One common solution to this problem is to increase the size of the ring used for verification, ensuring that the probability of an error passing undetected becomes vanishingly small. For example, in a protocol like SPDZ2k~\cite{spdz2k}, a larger ring size (e.g., $\ell = 100$) is used for 64-bit data, resulting in a soundness error of $2^{-36}$. In this scenario, even if the adversary tries to exploit the properties of the ring to introduce errors, the larger modulus significantly reduces the probability of success.
Since converting shares from $\ZZ_{2^{64}}$ to $\ZZ_{2^{100}}$ is expensive, it is better to perform the arithmetic directly in $\ZZ_{2^{100}}$ rather than during the verification phase, which doubles the overhead. For smaller data ranges (e.g., 1-bit values), this overhead ratio increases.

Our approach is different. We perform $\Delta$ over the extension ring $\ZZ_{2^\ell}[x]/f(x)$, where $f(x)$ is an irreducible polynomial of degree $d$ over $\ZZ_{2^\ell}$~\cite{BBC}. (The original share over $\ZZ_{2^\ell}$ becomes the free coefficient, with $d$ random elements added to the other coefficients.) According to the Schwartz-Zippel Lemma, the probability that a $|\GGG|$-degree non-zero polynomial $\Delta(r) = 0$ for a randomly chosen $r$ is at most $\frac{2^{(\ell-1)d} |\GGG| + 1}{2^{\ell d}} \approx \frac{|\GGG|}{2^d}$.

Compared to the larger ring size approach, the extension ring offers two advantages: (i) Since the share conversion to the extended ring is non-interactive, there are no modifications required during the circuit evaluation phase for the semi-honest version of the protocol. This avoids any additional communication costs typically incurred during the verification phase. (ii) The extension ring approach is compatible with the dimensionality reduction technique proposed by~\cite{GS20}, which reduces the communication complexity from $\Theta(|\GGG|)$ to $\Theta(\log |\GGG|)$. This optimization further improves the efficiency of the protocol, especially when dealing with a large number of multiplication gates.

In summary, our protocol operates as follows. First, we use a semi-honest protocol to evaluate the arithmetic circuit (on the ring $\ZZ_{2^\ell}$). We then transform all the multiplication gate triplets to the extended ring $\ZZ_{2^\ell}[x]/f(x)$ and reformulate their verification as an inner product. Next, we apply the dimension reduction method from~\cite{GS20} to reduce the $|\GGG|$-dimensional inner product to $\frac{|\GGG|}{2^R}$ dimensions. Finally, we use an inner product verification protocol to check the inner product after dimension reduction.

\myhalfbox{ Protocol $\Ptran (\{\langle x^{(i)}  \rangle,\langle y^{(i)}  \rangle, \langle z^{(i)} \rangle\}_{i \in \ZZ_N})$}{white!40}{white!10}{

    \emph{$\mathsf{Input:}$}  $N$ triples of $\langle \cdot \rangle$-shared multiplication. 
    
    \emph{$\mathsf{Output:}$} One triple of $N$-dimension $\langle \cdot \rangle^{\ell[x]}$-shared inner product. 
    
    \underline{\textbf{Preprocessing:}}
    \begin{itemize}[leftmargin=*]
    	\item[-] All parties invoke $\langle r \rangle^{\ell[x]}  \leftarrow \Pshc^{\ell[x]}$ locally;

	\end{itemize}

    \smallskip
    
    \underline{\textbf{Online:}}
    \begin{itemize}[leftmargin=*]
    \item[-] All parties reconstruct $r$ with $\Prec$ and calculate $r^i$ for all ${i \in \ZZ_N}$;
    \item[-] All parties transfer $\langle \cdot \rangle$ to $\langle \cdot \rangle^{\ell[x]}$ locally by setting the constant term of $\langle \cdot \rangle^{\ell[x]}$ to $\langle \cdot \rangle$;
    \item[-] All parties set $\langle z \rangle^{\ell[x]} := \sum^{N-1}_{i = 0} r^i \cdot \langle z^{(i)}  \rangle^{\ell[x]}$, and $\langle x'^{(i)}  \rangle^{\ell[x]} := r^i \cdot \langle x^{(i)}  \rangle^{\ell[x]}$ for all $i \in \ZZ_N$;
    \item[-] All parties output $\{\langle x'^{(i)}  \rangle^{\ell[x]},\langle y^{(i)}  \rangle^{\ell[x]}\}_{i \in \ZZ_N}; \langle z \rangle^{\ell[x]}$.
    \end{itemize}
         \medskip
    }{ Compression of Multiplication Triples.  \label{fig:MTP}}

    \smallskip
\noindent \textbf{Compression of multiplication triples.} We first design a subprotocol, $\Ptran$ (Fig.~\ref{fig:MTP}), which converts $|\GGG|$ multiplication triples over the ring $\ZZ_{2^\ell}$ into an $|\GGG|$-dimensional inner product over the polynomial ring $\ZZ_{2^\ell}[x]/f(x)$ for verification.

The transformation begins by locally converting the multiplication triples $\{\langle x^{(i)} \rangle, \langle y^{(i)} \rangle, \langle z^{(i)} \rangle\}_{i \in \ZZ_{|\GGG|}}$ to the polynomial ring equivalents $\{\langle x^{(i)} \rangle^{\ell[x]}, \langle y^{(i)} \rangle^{\ell[x]}, \langle z^{(i)} \rangle^{\ell[x]}\}{i \in \ZZ_{|\GGG|}}$. In this step, the free coefficient of the shares in $\ZZ_{2^\ell}[x]/f(x)$ is set to the original shares, while the remaining coefficients are padded with zero shares.

Next, the parties collectively generate a random challenge $r \in \ZZ_{2^\ell}[x]/f(x)$ by invoking the subprotocol $\langle r \rangle^{\ell[x]} \leftarrow \Pshc^{\ell[x]}$, followed by reconstructing $r$ via $\Prec$ (To ensure that $r$ is unknown to each party before circuit evaluation). Each party then locally computes $\langle z \rangle^{\ell[x]} = \sum_{i=0}^{|\GGG|-1} r^i \cdot \langle z^{(i)} \rangle^{\ell[x]}$ and $\langle x'^{(i)} \rangle^{\ell[x]} = r^i \cdot \langle x^{(i)} \rangle^{\ell[x]}$ for all $i \in \ZZ_{|\GGG|}$.

Finally, the protocol returns the ${|\GGG|}$-dimensional inner product tuple as $(\{\langle x'^{(i)} \rangle^{\ell[x]}, \langle y^{(i)} \rangle^{\ell[x]}\}_{i \in \ZZ_{|\GGG|}}, \langle z \rangle^{\ell[x]})$.

\begin{lemma}
\label{transl}
Suppose protocol $\Ptran$ take $\{\langle x^{(i)} \rangle,\langle y^{(i)}  \rangle, \langle z^{(i)}  \rangle\}_{i \in \ZZ_{|\GGG|}}$ as input, and it outputs 
 $\{\langle x'^{(i)} \rangle^{\ell[x]},\langle y^{(i)}  \rangle^{\ell[x]}\}_{i \in \ZZ_{|\GGG|}}; \langle z \rangle^{\ell[x]}$. The probability that the following two conditions hold is at most $\frac{|\GGG|}{2^d}$, where $d$ is the degree of $f(x)$ w.r.t. $\ZZ_{2^\ell}[x]/f(x)$:
 \begin{itemize}
\item $z = \sum^{{|\GGG|}-1}_{i = 0} x'_i \cdot y_i$
\item $\exists i \in \ZZ_{|\GGG|}$ s.t. $z_i \neq x_i \cdot y_i$
\end{itemize}
\end{lemma}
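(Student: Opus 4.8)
The plan is to reduce the claim to a Schwartz--Zippel-type bound over the Galois ring $R:=\ZZ_{2^\ell}[x]/f(x)$ and then to establish that bound. First I would unwind $\Ptran$, which is entirely local apart from the verified reconstruction of the challenge: it embeds each reconstructed value $x^{(i)},y^{(i)},z^{(i)}\in\ZZ_{2^\ell}$ into $R$ as a constant polynomial, samples a uniform $r\in R$ via $\Pshc^{\ell[x]}$ and opens it with $\Prec$, and sets $x'^{(i)}=r^i x^{(i)}$ and $z=\sum_{i=0}^{|\GGG|-1} r^i z^{(i)}$ in $R$. On the level of reconstructed values this makes the first bullet, $z=\sum_{i=0}^{|\GGG|-1} x'_i y_i$, equivalent to $g(r)=0$ in $R$, where $g(X):=\sum_{i=0}^{|\GGG|-1} e_i X^i$ and $e_i:=x^{(i)}y^{(i)}-z^{(i)}\in\ZZ_{2^\ell}\subseteq R$; and it makes the second bullet, $\exists i:\ z_i\neq x_iy_i$, equivalent to ``some $e_i\neq 0$'', i.e.\ to ``$g$ is a nonzero polynomial of degree at most $|\GGG|-1$ over $R$''. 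Since $r$ is produced in shared form before circuit evaluation and only reconstructed afterwards, it is independent of the (possibly corrupted) triples; hence the joint event has probability $0$ when the second bullet fails, and otherwise it suffices to bound $\Pr_{r\leftarrow R}[g(r)=0]$ for a fixed nonzero $g$ of degree $\le |\GGG|-1$.

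The core step is the Schwartz--Zippel bound over $R$: I would show that every nonzero $g\in R[X]$ of degree at most $n$ has at most $n\cdot 2^{(\ell-1)d}$ roots in $R$, and then substitute $n=|\GGG|-1$ and $|R|=2^{\ell d}$ to get $\Pr[g(r)=0]\le |\GGG|\cdot 2^{(\ell-1)d}/2^{\ell d}=|\GGG|/2^d$ (matching the displayed estimate $\tfrac{2^{(\ell-1)d}|\GGG|+1}{2^{\ell d}}$ up to the additive term in the numerator). Because $f$ is irreducible modulo $2$, $R=\mathrm{GR}(2^\ell,d)$ is a finite local ring with maximal ideal $(2)$, residue field $R/(2)\cong\FF_{2^d}$, $|(2)|=2^{(\ell-1)d}$ and $|2^{\ell-1}R|=2^d$. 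I would prove the root bound by induction on $\ell$. For $\ell=1$, $R=\FF_{2^d}$ is a field and this is the usual factor theorem. For $\ell>1$, reduce $g$ modulo $2$: if $\bar g\in\FF_{2^d}[X]$ is nonzero it has at most $n$ roots, every root of $g$ reduces to one of them, so the roots of $g$ lie in at most $n$ cosets of $(2)$, giving at most $n\cdot 2^{(\ell-1)d}$ roots; if $\bar g=0$ then all coefficients of $g$ lie in $2R$, so $g=2h$ for some $h\in R[X]$ of degree $\le n$ whose reduction modulo $2^{\ell-1}$ is a nonzero polynomial over $\mathrm{GR}(2^{\ell-1},d)$ (otherwise $g=0$), and any root $r$ of $g$ satisfies $2h(r)=0$, hence $h(r)\in 2^{\ell-1}R$ and $r\bmod 2^{\ell-1}$ is a root of $h\bmod 2^{\ell-1}$; by the induction hypothesis there are at most $n\cdot 2^{(\ell-2)d}$ of the latter, each lifting to $|2^{\ell-1}R|=2^d$ elements of $R$, for a total of at most $n\cdot 2^{(\ell-1)d}$ again.

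I expect the ring-theoretic step to be the main obstacle: over $\ZZ_{2^\ell}$ and its Galois extensions the naive factor-theorem count fails because $2$ is a zero divisor, and a polynomial all of whose coefficients are even reduces to $0$ modulo $2$; the induction above (peeling off one power of $2$ at a time, descending to $\mathrm{GR}(2^{\ell-1},d)$, and counting fibres carefully) is precisely what handles this, and it is the step where I would be most careful. Everything else --- checking that the local operations of $\Ptran$ produce the stated algebraic identity and that the two bullets are equivalent to ``$g\neq 0$'' and ``$g(r)=0$'' --- is routine bookkeeping. As an alternative to re-deriving the bound, one can instead invoke the generalized Schwartz--Zippel lemma over Galois rings from~\cite{BBC}.
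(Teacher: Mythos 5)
Your proposal is correct and takes essentially the same route as the paper: reduce both bullets to the event that the random challenge $r$, reconstructed via $\Prec$ independently of the errors, is a root of the nonzero error polynomial $g(X)=\sum_{i=0}^{|\GGG|-1}e_iX^i$ of degree at most $|\GGG|-1$ over $\ZZ_{2^\ell}[x]/f(x)$, and then bound the number of such roots by a Schwartz--Zippel-type count of $O(|\GGG|)\cdot 2^{(\ell-1)d}$ out of $2^{\ell d}$ ring elements. The only difference is that you additionally prove the Galois-ring root bound by induction on $\ell$ (peeling off factors of $2$ and descending to $\mathrm{GR}(2^{\ell-1},d)$), whereas the paper merely invokes the lemma and cites the bound; your argument is sound and in fact gives a slightly tighter constant than the paper's stated $2^{(\ell-1)d}(|\GGG|+1)$.
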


\myhalfbox{ Protocol $\Prd (\{\langle x^{(i)}  \rangle^{\ell[x]},\langle y^{(i)}  \rangle^{\ell[x]}\}_{i \in \ZZ_{|\GGG|}},\langle z \rangle^{\ell[x]})$}{white!20}{white!10}{
    
   	\emph{$\mathsf{Input:}$}  ${|\GGG|}$-dimension $\langle \cdot \rangle^{\ell[x]}$-shared inner product. 
    
    \emph{$\mathsf{Output:}$} ${|\GGG|}/2$-dimension $\langle \cdot \rangle^{\ell[x]}$-shared inner product. 
	
\underline{\textbf{Execution:}}
    \begin{itemize}[leftmargin=*]
    \item[-] For $i \in \ZZ_{{|\GGG|}/2}$, all parties set
    \begin{itemize}
    \item  $\langle f_i(0) \rangle^{\ell[x]} = \langle x^{(2\cdot i)} \rangle^{\ell[x]}$;$\langle f_i(1)\rangle^{\ell[x]} = \langle x^{(2\cdot i  + 1)} \rangle$; $\langle f_i(2)\rangle^{\ell[x]} =  2 \cdot\langle f_i(1) \rangle^{\ell[x]} - \langle f_i(0) \rangle^{\ell[x]}$;
    \item  $\langle g_i(0) \rangle^{\ell[x]} = \langle y^{(2\cdot i)} \rangle^{\ell[x]}$;$\langle g_i(1)\rangle^{\ell[x]} = \langle y^{(2\cdot i + 1)} \rangle^{\ell[x]}$; $\langle g_i(2)\rangle^{\ell[x]} =  2 \cdot\langle g_i(1) \rangle^{\ell[x]} - \langle g_i(0) \rangle^{\ell[x]}$;
    \item  $\langle h(0) \rangle^{\ell[x]} = \sum \langle f_i(0) \rangle^{\ell[x]} \cdot \langle g_i(0) \rangle^{\ell[x]}$;$\langle h(1)\rangle^{\ell[x]} = \langle z \rangle^{\ell[x]} - \langle h(0) \rangle^{\ell[x]}$; $\langle h(2)\rangle^{\ell[x]} = \sum \langle f_i(2) \rangle^{\ell[x]} \cdot \langle g_i(2) \rangle^{\ell[x]}$; 
    \end{itemize}
    	\item[-] All parties invoke $\langle \zeta \rangle^{\ell[x]} \leftarrow \Pshc^{\ell[x]}$ and reveal $\langle 2 \cdot \zeta \rangle^{\ell[x]}$;
	\item[-] All parties calculate 
	\begin{itemize}
	\item $\langle h(\zeta)\rangle^{\ell[x]} =  \sum^{2}_{i = 0}((\Pi^{2}_{j=1,j\neq i}\frac{\zeta-j}{i-j})\cdot \langle h(i)\rangle^{\ell[x]})$;
	\item $\langle f_i(\zeta)\rangle^{\ell[x]} =  \zeta \cdot\langle f_i(1) \rangle^{\ell[x]} - (\zeta - 1)\langle f_i(0) \rangle^{\ell[x]}$;
	\item $\langle g_i(\zeta)\rangle^{\ell[x]} =  \zeta \cdot\langle g_i(1) \rangle^{\ell[x]} - (\zeta - 1)\langle g_i(0) \rangle^{\ell[x]}$;

	\end{itemize}
	\item[-] All parties output  $\{\langle f_{i}(\zeta) \rangle^{\ell[x]}, \langle g_{i}(\zeta) \rangle^{\ell[x]}\}_{i \in \ZZ_{{|\GGG|}/2}}; \langle h(\zeta)\rangle^{\ell[x]}$.
    \end{itemize}

    \smallskip
}{ The Inner Product Dimension Reduction Protocol \label{fig:Reduce}} 
\begin{proof} 
It is sufficient to demonstrate that $r$ is uniformly random, assuming that the reconstruction protocol $\Prec$ does not abort. The adversary's goal is to manipulate the verification by ensuring that the following equation holds:
$$
\sum^{|\GGG|-1}_{i = 0} r^{i} \cdot z^{(i)}=\sum^{|\GGG|-1}_{i = 0} r^{i} \cdot x^{(i)} \cdot y^{(i)}
$$
where $z^{(i)} = x^{(i)} \cdot y^{(i)} + e^{(i)}$ for each $i \in \ZZ_{|\GGG|}$, and ${e_i}_{i \in \ZZ{|\GGG|}}$ represents the list of errors introduced by the adversary at each gate. 
This can be written as 
$$
\sum^{|\GGG|-1}_{i = 0} r^{i} \cdot x^{(i)} \cdot y^{(i)} = \sum^{|\GGG|-1}_{i = 0} r^{i}\cdot( x^{(i)} \cdot y^{(i)} + e^{(i)})
$$
By simplifying, we get:
$$
\sum^{|\GGG|-1}_{i = 0} r^{i} \cdot x^{(i)} \cdot y^{(i)} = \sum^{|\GGG|-1}_{i = 0} r^{i}\cdot x^{(i)} \cdot y^{(i)} + \sum^{|\GGG|-1}_{i = 0} r^{i}\cdot e^{(i)}
$$
To satisfy this equation, the adversary must ensure that the error terms cancel out, which would require:
$$
\sum^{|\GGG|-1}_{i = 0} r^{i}\cdot e^{(i)}
$$
This means that the adversary needs to find a value of $r$ that is a root of the polynomial:
$$f(x) = \sum^{|\GGG|-1}_{i = 0} x^{i} \cdot e^{(i)}$$
Since this polynomial is of degree at most ${|\GGG|} - 1$, the number of possible roots that satisfy the equation is limited. Specifically, for a degree-$\{{|\GGG|} - 1\}$ polynomial over the ring $\ZZ_{2^\ell}[x]$, according to the Schwartz-Zippel Lemma, the number of potential roots is bounded by $2^{(\ell-1)d} ({|\GGG|} + 1)$.

Thus, the probability that a uniformly random $r$ selected during the protocol coincidentally matches one of these roots is given by:
$$\frac{2^{(\ell-1)d}({|\GGG|} + 1)}{2^{\ell d}} \approx \frac{{|\GGG|}}{2^d}$$.
 \end{proof} 

 \myhalfbox{ Protocol $\Pvdot (\{\langle x^{(i)}  \rangle^{\ell[x]},\langle y^{(i)}  \rangle^{\ell[x]}\}_{i\in \ZZ_{|\GGG|}}, \langle z \rangle^{\ell[x]})$}{white!20}{white!10}{
   	\emph{$\mathsf{Input:}$}  A ${|\GGG|}$-dimension $\langle \cdot \rangle^{\ell[x]}$-shared inner product pair. 
    
    \emph{$\mathsf{Output:}$}  $z \overset{?}{=} \sum^{|\GGG|}_{i = 1} x^{(i)}  \cdot y^{(i)} $. 

\underline{\textbf{Execution:}}
    \begin{itemize}[leftmargin=*]
    \item[-] All parties invoke $\langle \alpha \rangle^{\ell[x]} \leftarrow \Pshc^{\ell[x]}$;
    \item[-] All parties calculate $\langle x'^{(i)} \rangle^{\ell[x]} = \langle x^{(i)}  \rangle^{\ell[x]} \cdot \langle \alpha  \rangle^{\ell[x]}$;
    \item[-] All parties calculate $\langle \Delta \rangle^{\ell[x]}  =  \sum^{|\GGG|}_{i = 1}\langle x'^{(i)}  \rangle^{\ell[x]} \cdot \langle y^{(i)}  \rangle^{\ell[x]} - \langle \alpha  \rangle^{\ell[x]} \cdot \langle z \rangle^{\ell[x]}$;
    \item[-] All parties call $\Delta = \Prec^{\ell[x]}(\langle \Delta \rangle^{\ell[x]})$;
    \item[-] All parties output $1$ if $\Delta = 0$, otherwise $0$.
    	    \end{itemize}

    \smallskip
}{ The Inner Product Verification Protocol \label{fig:DPV}}

\noindent \textbf{Dimension reduction.}
We extend the dimension reduction technique of  Goyal {\em et al.}~\cite{GS20} to our 3PC over ring setting. As shown in Fig.~\ref{fig:Reduce}, protocol $\Prd$ takes a shared triple $(\{\langle x^{(i)}  \rangle^{\ell[x]},\langle y^{(i)}  \rangle^{\ell[x]}\}_{i \in \ZZ_{|\GGG|}}, \langle z \rangle^{\ell[x]})$ as input  and outputs  $(\{\langle x'^{(i)}  \rangle^{\ell[x]},\langle y'^{(i)}  \rangle^{\ell[x]}\}_{i \in \ZZ_{{|\GGG|}/2}}, \langle z' \rangle^{\ell[x]})$. $\Prd$ ensures that $\sum^{{|\GGG|-1}}_{i = 0}x^{(i)}  \cdot y^{(i)}  = z$  if and only if $\sum^{{|\GGG|}/2-1}_{i = 0}x'^{(i)}  \cdot y'^{(i)}  = z'$ except for a negligible probability. At a high level, for the inner product input $\{x^{(i)} \}_{i \in \ZZ_{|\GGG|}}$ and $\{y^{(i)} \}_{i \in \ZZ_{|\GGG|}}$, we can utilize $x^{(2i)}$ and $x^{(2i-1)}$ to interpolate ${|\GGG|}/2$ linear functions $\{f_{i}(\cdot)\}_{i \in \ZZ_{{|\GGG|}/2}}$ at the point $0$ and $1$, and similarly interpolate  $\{g_{i}(\cdot)\}_{i \in \ZZ_{{|\GGG|}/2}}$ by $\{y^{(i)} \}_{i \in \ZZ_{|\GGG|}}$. Considering the correct output $z$, we have $$z = \sum^{{|\GGG|}/2}_{i = 0} f_{i}(0)\cdot g_{i}(0) + f_{i}(1)\cdot g_{i}(1)$$
Let $h(\cdot) = \sum^{{|\GGG|}/2}_{i = 0}  f_{i}(\cdot) \cdot g_{i}(\cdot)$. This leads to the equation $h(1) = z - h(0)$.
The protocol $\Prd$ computes $h(0) = \sum^{{|\GGG|}/2}_{i = 0}  f_{i}(0) \cdot g_{i}(0)$ and $h(2)=\sum^{{|\GGG|}/2}_{i = 0}  f_{i}(2) \cdot g_{i}(2)$, and from this, it calculates $h(1) = z - h(0)$. Then, $\Prd$ interpolates the polynomial $h(x)$ using the values $h(0)$, $h(1)$, and $h(2)$. Finally, all parties choose a random point $\zeta$ and output the new shared triple $(\{\langle f_{i}(\zeta) \rangle^{\ell[x]}, \langle g_{i}(\zeta) \rangle^{\ell[x]}\}_{i \in \ZZ_{{|\GGG|}/2}}, \langle h(\zeta)\rangle^{\ell[x]})$, which preserves the inner product relation if and only if the initial condition $z = \sum^{{|\GGG|}/2}_{i = 1} f_{i}(0)\cdot g_{i}(0) + f_{i}(1)\cdot g_{i}(1)$ holds.

It is important to note that the points 0, 1, and 2 correspond to ring elements with free coefficients of 0, 1, and 2 in $\ZZ_{2^\ell}[x]/f(x)$.

The protocol $\Prd$ requires one round of communication involving $5 \ell \cdot d$ bits in the online phase and one round involving $\ell \cdot d$ bits in the offline phase. We execute $\Prd$ $R$ times to reduce the inner product dimension to ${|\GGG|} / 2^R$, after which the resulting vectors are verified by checking


$$\sum^{{|\GGG|} / 2^R}_{i = 0}\langle f_i(\zeta) \rangle^{\ell[x]} \cdot \langle g_i(\zeta) \rangle^{\ell[x]}=\langle h(\zeta) \rangle^{\ell[x]}$$ We prove the soundness error of the $\Prd$ is $\frac{1}{2^{d-1}}$ in Lemma~\ref{polyred}.

\myhalfbox{ Protocol $\Pmulv^R (\{\langle x^{(i)} \rangle,\langle y^{(i)} \rangle, \langle z^{(i)} \rangle\}_{i \in \ZZ_{|\GGG|}})$}{white!20}{white!10}{
    \emph{$\mathsf{Input:}$}  ${|\GGG|}$ pairs of $\langle \cdot \rangle$-shared multiplication. 
    
    \emph{$\mathsf{Output:}$}  $z^{(i)} \overset{?}{=}  x^{(i)} \cdot y^{(i)}$ for all $i \in \ZZ_{|\GGG|}$. 
    
   \underline{\textbf{Execution:}}	
    \begin{itemize}[leftmargin=*]
    \item[-] All parties invoke $\Ptran(\{\langle x^{(i)} \rangle,\langle y^{(i)} \rangle; \langle z^{(i)} \rangle\}_{i \in \ZZ_{|\GGG|}})$ to get $\{\langle x^{(i)} \rangle^{\ell[x]},\langle y^{(i)} \rangle^{\ell[x]}\}_{i \in \ZZ_{|\GGG|}};\langle z \rangle^{\ell[x]}$;
    \item[-] For $k = 1,\ldots, R$, all parties perform:
       \begin{itemize}
   	\item $\{\{\langle x^{(i)} \rangle^{\ell[x]},\langle y^{(i)} \rangle^{\ell[x]}\}_{i \in \ZZ_{{|\GGG|}/2^k}};\langle z \rangle^{\ell[x]} \} \leftarrow \Prd (\{\langle x^{(i)} \rangle^{\ell[x]},\langle y^{(i)} \rangle^{\ell[x]}\}_{i \in \ZZ_{{|\GGG|}/2^{k-1}}};\langle z \rangle^{\ell[x]})$;
	\end{itemize}
	\item[-] All parties invoke $b = \Pvdot (\{\langle x^{(i)} \rangle^{\ell[x]},\langle y^{(i)} \rangle^{\ell[x]}\}_{i \in \ZZ_{{|\GGG|}/2^R}};\langle z \rangle^{\ell[x]})$;
	\item[-] All parties output $b$.
    \end{itemize}

    \smallskip
}{ The Batch Multiplication Verification Protocol \label{fig:PPEV}}

\begin{lemma}
\label{polyred}
Suppose $\Prd$ take $(\{\langle x^{(i)}  \rangle^{\ell[x]},\langle y^{(i)}  \rangle^{\ell[x]}\}_{i \in \ZZ_{|\GGG|}},\langle z \rangle^{\ell[x]})$ as input, and it outputs the new list 
$(\{\langle x'^{(i)}  \rangle^{\ell[x]}, \langle y'^{(i)}  \rangle^{\ell[x]}\}_{i \in \ZZ_{{|\GGG|}/2}}, \langle z'\rangle^{\ell[x]})$. The probability that the following two conditions hold is at most $\frac{1}{2^{d-1}}$, where $d$ is the degree of $f(x)$ w.r.t. $\ZZ_{2^\ell}[x]/f(x)$:
 \begin{itemize}
\item $z' = \sum^{{|\GGG|}/2}_{i = 0} x'^{(i)}  \cdot y'^{(i)} $
\item $z \neq \sum^{{|\GGG|}}_{i = 0} x^{(i)}  \cdot y^{(i)} $
\end{itemize}
\end{lemma}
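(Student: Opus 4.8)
\emph{Proof plan.} The argument follows the soundness template for Lagrange-interpolation dimension reduction of~\cite{GS20}, transported to the Galois ring $R:=\ZZ_{2^\ell}[x]/f(x)$. Write the output triple of $\Prd$ as $\bigl(\{\langle f_i(\zeta)\rangle^{\ell[x]},\langle g_i(\zeta)\rangle^{\ell[x]}\}_i,\langle h(\zeta)\rangle^{\ell[x]}\bigr)$, so condition (i) is $\sum_i f_i(\zeta)g_i(\zeta)=h(\zeta)$ and, writing $e:=z-\sum_{j\in\ZZ_{|\GGG|}}x^{(j)}y^{(j)}$, condition (ii) is $e\neq 0$. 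The plan is to condition on the adversary's full view up to the moment $\zeta$ is drawn inside $\Prd$ (by opening $\langle 2\zeta\rangle^{\ell[x]}$), and then to show that on this conditioned view the bad event ``(i) and (ii)'' becomes the event that a single fixed non-zero polynomial of degree $\le 2$ over $R$ vanishes at $\zeta$. On the conditioned view the inputs $\{\langle x^{(i)}\rangle,\langle y^{(i)}\rangle\}$, the value $z$, and the two products $h(0)$, $h(2)$ that the parties compute with the semi-honest inner-product subroutine are all fixed; by the ``secure up to additive attacks'' property of that subroutine, the only deviation available to the adversary is to force $h(0)=\sum_i f_i(0)g_i(0)+e_0$ and $h(2)=\sum_i f_i(2)g_i(2)+e_2$ for errors $e_0,e_2\in R$ that are likewise fixed before $\zeta$.

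First I would express the output defect as one polynomial evaluation. Because $\langle f_i(\zeta)\rangle$, $\langle g_i(\zeta)\rangle$ and $\langle h(\zeta)\rangle$ are affine combinations of share-consistent shares (share inconsistency would already have been caught), the reconstructed values are $f_i(\zeta)$, $g_i(\zeta)$ and $h(\zeta)$, where $f_i,g_i$ are the degree-$1$ interpolants of $\bigl(0,x^{(2i)}\bigr),\bigl(1,x^{(2i+1)}\bigr)$ and $\bigl(0,y^{(2i)}\bigr),\bigl(1,y^{(2i+1)}\bigr)$, and $h(\zeta)$ is the degree-$\le 2$ Lagrange interpolant through $\bigl(0,h(0)\bigr),\bigl(1,h(1)\bigr),\bigl(2,h(2)\bigr)$ with $h(1):=z-h(0)$. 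With $h^\star(t):=\sum_i f_i(t)g_i(t)$ (degree $\le 2$),
\[
 h(\zeta)-\sum_i f_i(\zeta)g_i(\zeta)\;=\;h(\zeta)-h^\star(\zeta)\;=\;Q(\zeta),
\]
where $Q(t):=\bigl(\sum_{k=0}^{2}L_k(t)\,h(k)\bigr)-h^\star(t)$ with $L_k$ the Lagrange basis for nodes $0,1,2$; $Q$ has degree $\le 2$ and depends only on data fixed before $\zeta$. Evaluating at the nodes (where $L_k(j)=\delta_{jk}$, the relevant denominators dividing the numerators there) gives $Q(0)=e_0$, $Q(1)=h(1)-h^\star(1)=e-e_0$ and $Q(2)=e_2$. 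Hence if $e_0=0$ then $Q(1)=e\neq 0$, and if $e_0\neq 0$ then $Q(0)\neq 0$: under (ii), $Q$ is a non-zero polynomial.

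It then remains to bound $\Pr_\zeta[Q(\zeta)=0]$ for this fixed non-zero polynomial of degree $\le 2$, which is exactly the bad event on the conditioned view. I would close with the same Galois-ring Schwartz--Zippel estimate used in the proof of Lemma~\ref{transl}: a non-zero polynomial of degree $\le 2$ over $R=\ZZ_{2^\ell}[x]/f(x)$ has at most $2\cdot 2^{(\ell-1)d}$ roots, so for $\zeta$ (conditionally) uniform over a challenge space of size $2^{\ell d}$ the probability is at most $\tfrac{2\cdot 2^{(\ell-1)d}}{2^{\ell d}}=\tfrac1{2^{d-1}}$.

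The delicate step is the last one, and it is also what the ``open $\langle 2\zeta\rangle$'' move is engineered for. Over the non-field $R$ the element $2$ is not invertible, so both the Lagrange coefficients at the scalar nodes $0,1,2$ (which carry denominators $\pm2$) and the root count above need care: one must check that publishing $2\zeta$ makes $h(\zeta)$ well defined, so that $Q$ and $Q(\zeta)$ indeed live in $R$, and simultaneously leaves $\zeta$ distributed so that no fixed non-zero degree-$\le2$ polynomial vanishes on more than a $\tfrac1{2^{d-1}}$ fraction of its support --- equivalently, that the effective sampling set behaves like an exceptional set of the Galois ring. A secondary and more routine point is justifying that $e_0,e_2$ are genuinely additive and committed before $\zeta$; this is immediate from ``security up to additive attacks'' of the replicated inner product that produces $\langle h(0)\rangle$ and $\langle h(2)\rangle$, and no stronger property of that subroutine is needed.
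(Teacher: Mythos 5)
Your proposal is correct and follows essentially the same route as the paper's proof: both reduce the bad event to the vanishing at $\zeta$ of a fixed degree-$\le 2$ polynomial whose values at the nodes $0,1,2$ are the committed additive errors ($e_0$, $e-e_0$, $e_2$), and both then apply the Galois-ring Schwartz--Zippel count of roughly $2\cdot 2^{(\ell-1)d}$ roots out of $2^{\ell d}$ to get the bound $\tfrac{1}{2^{d-1}}$. Your explicit verification that the defect polynomial $Q$ is nonzero whenever $e\neq 0$, and your flagging of the non-invertibility of $2$ in the Lagrange coefficients (the reason the protocol reveals $\langle 2\zeta\rangle^{\ell[x]}$), are details the paper leaves implicit, but the underlying argument is the same.
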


\begin{proof} 
For clarity, we denote \( h'(k) = \sum^{{|\GGG|}/2}_{i = 0} f_i(k) \cdot g_i(k) \). The adversary’s goal is to manipulate the computation such that \( h(\zeta) = h'(\zeta) \), while also ensuring that 
\[
h(0) + h(1) = h'(0) + h'(1) + e,
\]
where \( e \) represents the error introduced in \( z \). Simultaneously, the adversary can introduce new errors \( e_1 \) and \( e_2 \) during the calculation of \( h(0) \) and \( h(2) \), such that:

\[
h(0) = h'(0) + e_1, \quad h(1) = h'(1) + e - e_1, \quad h(2) = h'(2) + e_2.
\]
Considering the Lagrange interpolation for randomly chosen \( \zeta \in \ZZ_{2^\ell}[x] \), we have:
\begin{equation*}
\begin{split}
 h(\zeta) &= \sum^{2}_{i = 0} \left( \prod^{2}_{\substack{j=0 \\ j \neq i}} \frac{\zeta - j}{i - j} \right) \cdot h(i) = \frac{(\zeta - 1)(\zeta - 2)}{2} \cdot h(0)\\
  &+ \zeta(2 - \zeta) \cdot h(1) + \frac{(\zeta - 1)\zeta}{2} \cdot h(2)
\end{split}
\label{eq:2}
\end{equation*}
and for \( h'(\zeta) \):

\[
h'(\zeta) = \frac{(\zeta - 1)(\zeta - 2)}{2} \cdot h'(0) + \zeta(2 - \zeta) \cdot h'(1) + \frac{(\zeta - 1)\zeta}{2} \cdot h'(2).
\]
To ensure \( h(\zeta) = h'(\zeta) \), the adversary must satisfy the following equation:

\[
\frac{(\zeta - 1)(\zeta - 2)}{2} \cdot e_1 + \zeta(2 - \zeta) \cdot (e - e_1) + \frac{(\zeta - 1)\zeta}{2} \cdot e_2 = 0
\]

The probability that the adversary can choose \( e, e_1, e_2 \) such that this equation holds is equivalent to making \( \zeta \) a root of the degree-2 polynomial:

\[
f(x) = \frac{(x - 1)(x - 2)}{2} \cdot e_1 + x(2 - x) \cdot (e - e_1) + \frac{(x - 1)x}{2} \cdot e_2
\]
over \( \ZZ_{2^\ell}[x] \), which has at most \( 2^{2(\ell - 1)d} + 1 \) roots. Therefore, the soundness error is:

\[
\frac{2^{(\ell - 1)d + 1} + 1}{2^{\ell d}} \approx \frac{1}{2^{d-1}}.
\]

 \end{proof} 

 \myhalfbox{ Protocol $\Pmul (\langle x \rangle, \langle y \rangle)$}{white!40}{white!10}{
    \emph{$\mathsf{Input:}$} $\langle \cdot \rangle$-shared value $x,y$. 
  	
  \emph{$\mathsf{Output:}$} $\langle \cdot \rangle$-shared value $z$ where $z = x \cdot y$.
  
    \underline{\textbf{Preprocessing:}}
    \begin{itemize}[leftmargin=*]
    \item[-] All parties prepare $[r_z] \leftarrow \Psha$ locally;
	\item[-] $P_0$ calculates $\Gamma = r_{x}\cdot r_{y} + r_z$ and shares it with $\Psha(\Gamma)$;
	
	\end{itemize}

    \smallskip
    
    \underline{\textbf{Online:}}
    \begin{itemize}[leftmargin=*]
    \item[-]$P_j$ for $j \in \{1,2\}$ calculates $[m_z]_j =(j-1)m_{x} \cdot m_{y} - m_{x} [r_{y}]_j - m_{y_i} [r_{x}]_j + [\Gamma]$ and mutually exchange their shares to reconstruct $m_z$.
    \end{itemize}
         \medskip
     \underline{\textbf{Postprocessing:}}
     \begin{itemize}[leftmargin=*]
    \item[-]For all multiple gate wire value $\{\langle x^{(i)} \rangle, \langle y^{(i)} \rangle, \langle z^{(i)} \rangle\}_{i \in \ZZ_{|\GGG|}}$, all parties call $\Pmulv^R (\{\langle x^{(i)} \rangle, \langle y^{(i)} \rangle; \langle z^{(i)} \rangle\}_{i \in \ZZ_{|\GGG|}})$ to verify correctness.
    \end{itemize}
    }{ The Multiplication Protocol   \label{fig:MultP}}

\noindent \textbf{Inner product verification.}
Our inner product verification protocol, denoted as $\Pvdot$ (Fig.~\ref{fig:DPV}), verifies the inner product relationship of shared values over the polynomial ring $\ZZ_{2^\ell}[x]/f(x)$. Specifically, to verify the relation 
\[
\sum_{i=0}^{|\GGG| / 2^R} \langle x^{(i)} \rangle^{\ell[x]} \cdot \langle y^{(i)} \rangle^{\ell[x]} = \langle z \rangle^{\ell[x]},
\]
$\Pvdot$ checks whether the expression
\[
\langle \alpha \rangle^{\ell[x]} \cdot (\sum_{i=0}^{|\GGG| / 2^R} \langle x^{(i)} \rangle^{\ell[x]} \cdot \langle y^{(i)} \rangle^{\ell[x]} - \langle z \rangle^{\ell[x]})
\]
is equal to zero.

Unfortunately, as far as we know, there is currently no semi-honest 3PC protocol that securely evaluates a cubic (degree-3) polynomial while being resilient to additive attacks. As an alternative, we compute $x'^{(i)} = \alpha \cdot x^{(i)}$ for each $i \in \ZZ_{|\GGG|}$. Subsequently, all parties evaluate the inner product 
\[
\sum_{i=0}^{|\GGG| / 2^R} x'^{(i)} \cdot y^{(i)}.
\]

This method, however, does not achieve complete security against additive attacks, as an adversary may introduce an error $e'^{(i)}$ into $x'^{(i)}$, resulting in an overall error term:
\[
\sum_{i=0}^{|\GGG| / 2^R} e'^{(i)} \cdot y^{(i)},
\]
which is dependent on $y^{(i)}$. Nevertheless, considering that $y^{(i)}$ is obtained via multiple Lagrange interpolations in the prior dimension reduction protocol, $y^{(i)}$ can be treated as a random value.

Let $e$ denote the error in $z$. The adversary must guess $\alpha \cdot e + \sum_{i=0}^{|\GGG| / 2^R} e'^{(i)} \cdot y^{(i)} = 0$, where $y^{(i)}$ is effectively random. The probability of success for this guess is $\frac{1}{2^d}$.

We prove in Lemma~\ref{polyvv} that the soundness error of the $\Pvdot$ protocol is $\frac{1}{2^d}$.
\myhalfbox{ Protocol $\Pdot (\langle x_1 \rangle,\ldots,\langle x_{n} \rangle, \langle y_1 \rangle,\ldots,\langle y_{n} \rangle)$}{white!40}{white!10}{
    \emph{$\mathsf{Input:}$} $\langle \cdot \rangle$-shared value list of $x_i$ and $y_i$. 
    
    \emph{$\mathsf{Output:}$} $\langle \cdot \rangle$-shared value of $z$ where $z = \sum^n_{i = 1} x_i \cdot y_i$. 
    
    \underline{\textbf{Preprocessing:}}
    \begin{itemize}[leftmargin=*]
    	\item[-] All parties prepare $[r_z] \leftarrow \Psha$ locally;
	\item[-] $P_0$ calculates $\Gamma = \sum^{n}_{i = 1} r_{x_i}\cdot r_{y_i} + r_z$ and shares it with $\Psha(\Gamma)$;

	\end{itemize}

    \smallskip
    
    \underline{\textbf{Online:}}
    \begin{itemize}[leftmargin=*]
    \item[-]$P_j$ for $j \in \{1,2\}$ calculates $[m_z]_j = \sum^{n}_{i=1} (j-1)m_{x_i} \cdot m_{y_i} - m_{x_i} [r_{y_i}]_j - m_{y_i} [r_{x_i}]_j + [\Gamma]_j$ and mutually exchange their shares to reconstruct $m_z$.
    \end{itemize}
    
    \underline{\textbf{Postprocessing:}}
    \begin{itemize}[leftmargin=*]
    \item[-]For ${|\GGG|}$ pairs inner product result $\{\{\langle x^{(j)}_i \rangle,\langle y^{(j)}_i \rangle\}_{i \in \ZZ_{n_j}}; \langle z^{(j)} \rangle\}_{j \in \ZZ_{|\GGG|}}$, all parties call $\Pvdot^R (\{\{\langle x^{(j)}_i \rangle,\langle y^{(j)}_i \rangle\}_{i \in \ZZ_{n_j}}; \langle z^{(j)} \rangle\}_{j \in \ZZ_{|\GGG|}})$ to verify correctness.
    \end{itemize}
         \medskip
    }{ The Inner Product Protocol   \label{fig:DPP}}  

\begin{lemma}
\label{polyvv}
Let $(\{\langle x^{(i)} \rangle^{\ell[x]},\langle y^{(i)} \rangle^{\ell[x]}\}_{i\in \ZZ_{|\GGG|}}, \langle z \rangle^{\ell[x]})$ be the input of  protocol $\Pvdot$ depicted in Fig.~\ref{fig:DPV}. The probability that $\Pvdot$ outputs $1$ and $z \neq \sum^{{|\GGG|}-1}_{i = 0} x^{(i)} \cdot y^{(i)}$  is at most $\frac{1}{2^d}$, where $d$ is the degree of $f(x)$ w.r.t. $\ZZ_{2^\ell}[x]/f(x)$.
\end{lemma}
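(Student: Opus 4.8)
The plan is to reduce the claim to a counting argument over the random challenge $\alpha$ in the extension ring $R:=\ZZ_{2^\ell}[x]/f(x)$. First I would invoke the verifiable-reconstruction property of $\langle\cdot\rangle^{\ell[x]}$ (the $\Prec^{\ell[x]}$ discussion): since two of the three parties are honest and mutually cross-check their shares, $\Prec^{\ell[x]}(\langle\Delta\rangle^{\ell[x]})$ either aborts or returns the genuine plaintext underlying $\langle\Delta\rangle^{\ell[x]}$. Hence the only way $\Pvdot$ can output $1$ while $z\neq\sum_i x^{(i)}y^{(i)}$ is that this genuine plaintext equals $0$, and it suffices to bound the probability of that event.

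Next I would make the plaintext of $\langle\Delta\rangle^{\ell[x]}$ explicit. Every product formed inside $\Pvdot$ is a replicated multiplication (resp.\ inner product) over $R$, which is secure up to additive attacks by exactly the argument used over $\ZZ_{2^\ell}$ (cf.\ \cite{RSSadditive} and the ``security up to additive attacks'' paragraph; invertibility is never used there). So I would write $x'^{(i)} = \alpha\,x^{(i)} + e^{(i)}$ and collect every additive error injected while forming $\langle\Delta\rangle^{\ell[x]}$ into a single adversary-chosen element $E\in R$, obtaining
\[
\Delta \;=\; \sum_i x'^{(i)}y^{(i)} - \alpha z + E \;=\; -\alpha\,e + c,
\]
where $e := z-\sum_i x^{(i)}y^{(i)}\neq 0$ by hypothesis and $c := \sum_i e^{(i)}y^{(i)} + E$. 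The crucial ordering fact is that $e$ was fixed during circuit evaluation and the $y^{(i)}$ were fixed by the preceding $\Prd$ calls, whereas $e^{(i)}$ and $E$ are committed while only the adversary's \emph{share} of $\langle\alpha\rangle^{\ell[x]}$ is in view; in every corruption pattern the missing share of $\langle\alpha\rangle^{\ell[x]}$ is a fresh uniform element of $R$, so conditioned on the whole adversarial view $\alpha$ is uniform over $R$ and independent of the (now fixed) $e$ and $c$. Thus $\Delta=0$ holds iff the uniform $\alpha$ is a root of $\alpha e - c$.

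Then I would count the roots of $\alpha\mapsto\alpha e - c$ in $R$, which is the Galois ring $\mathrm{GR}(2^\ell,d)$ with residue field $\FF_{2^d}$. Writing the nonzero $e$ uniquely as $e = 2^{j}u$ with $u$ a unit and $0\le j\le\ell-1$, the solution set of $\alpha e = c$ is either empty or a coset of $\mathrm{Ann}(e)=2^{\ell-j}R$, which has $2^{jd}\le 2^{(\ell-1)d}$ elements; dividing by $|R| = 2^{\ell d}$ yields probability at most $2^{-d}$, which is the claimed bound.

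\textbf{Main obstacle.} The delicate step is the independence argument in the second paragraph: one must check that although the adversary may choose $e^{(i)}$ (and $E$) adaptively from its share of $\langle\alpha\rangle^{\ell[x]}$ and from shares of the $y^{(i)}$, this does not compromise uniformity of the \emph{plaintext} $\alpha$ conditioned on the adversarial view, so that the counting in the last paragraph applies verbatim. A secondary subtlety is that the informal discussion preceding the lemma attributes the $2^{-d}$ bound to ``$y^{(i)}$ being effectively random''; I would instead observe that $e\neq 0$ together with uniformity of $\alpha$ already suffices, which sidesteps having to formalize the distribution of the interpolated $y^{(i)}$.
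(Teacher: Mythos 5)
Your proposal is correct and follows essentially the same route as the paper's proof: both reduce the event to $\Delta = \alpha e + c$ with $e \neq 0$ fixed and $c$ (the folded-in additive errors, including $\sum_i e'^{(i)} y^{(i)}$) independent of the uniformly random $\alpha$, and then bound the probability that $\alpha$ solves this degree-one equation over $\ZZ_{2^\ell}[x]/f(x)$ by $2^{-d}$. Your annihilator/coset count in the Galois ring is just a sharper, exact rendering of the paper's Schwartz–Zippel step (and your explicit handling of $\Prec$-correctness and the conditioning of $\alpha$ on the adversarial view tightens points the paper leaves implicit), so no substantive difference remains.
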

\begin{proof} 
 
Since $\alpha$ is uniformly random and unknown to the adversary, for $z = \sum_{i=0}^{|\GGG|} x^{(i)} \cdot y^{(i)} + e$, we have 
\[
\Delta = \alpha \cdot e + \sum_{i=0}^{|\GGG| / 2^R} e'^{(i)} \cdot y^{(i)},
\]
where $e'^{(i)}$ is introduced during the evaluation of $\alpha \cdot x^{(i)}$. Given that 3PC multiplication is secure up to additive attacks, $e'^{(i)}$ is independent of $\alpha$. Therefore, we can treat $\sum_{i=0}^{|\GGG| / 2^R} e'^{(i)} \cdot y^{(i)}$ as an overall error term $e'$.

By the Schwartz-Zippel Lemma, the polynomial $f(x) = e \cdot x + e'$ over the ring $\ZZ_{2^\ell}[x]$ has at most $2^{(\ell - 1)d} + 1$ roots. Consequently, the probability that the adversary can deliberately choose $e$ such that $\Delta = 0$ is 
\[
\frac{2^{(\ell - 1)d} + 1}{2^\ell d} \approx \frac{1}{2^d}.
\]

 \end{proof} 



Our batch multiplication verification protocol $\Pmulv$ in Fig.~\ref{fig:PPEV} integrates the above three subroutines, which requires one round communication of $(R + {|\GGG|}/2^R)\ell \cdot d $ bits in the offline phase and  $R+2$-round  communication of $(5R + 3 + {|\GGG|}/2^R)\ell \cdot d $ bits in the online phase for ${|\GGG|}$ multiplication triples. We prove soundness error of $\Pmulv$ is  $\frac{{|\GGG|}}{2^{d - R - 2}}$ in Thm.~\ref{multvv}.  

\begin{theorem}
\label{multvv}
Let $\{\langle x^{(i)}  \rangle,\langle y^{(i)}  \rangle, \langle z^{(i)}  \rangle\}_{i \in \ZZ_{|\GGG|}}$ be the input of protocol $\Pmulv^R$ depicted in Fig.~\ref{fig:PPEV}. The probability 
$\Pmulv^R$ outputs $1$ and $\exists i \in \ZZ_{|\GGG|}$ s.t. $z^{(i)}  \neq x^{(i)}  \cdot y^{(i)} $ is at most  $\frac{{|\GGG|}}{2^{d - R - 2}}$, where $d$ is the degree of $f(x)$ w.r.t. $\ZZ_{2^\ell}[x]/f(x)$.
\end{theorem}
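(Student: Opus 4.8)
The plan is to prove the bound by a hybrid-style union argument that chains the three soundness statements already established: Lemma~\ref{transl} for the compression step $\Ptran$, Lemma~\ref{polyred} for each of the $R$ calls to $\Prd$, and Lemma~\ref{polyvv} for the final check $\Pvdot$. All probabilistic content is inherited from these lemmas; the theorem proof is essentially bookkeeping plus one elementary inequality.

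First I would fix the event decomposition. Assume $\Pmulv^R$ outputs $1$ while $z^{(i)}\neq x^{(i)}\cdot y^{(i)}$ for some $i\in\ZZ_{|\GGG|}$. Let $\mathsf{IP}_0$ denote the inner-product relation $z=\sum_i x'^{(i)}\cdot y^{(i)}$ holding on the tuple output by $\Ptran$, and for $k=1,\dots,R$ let $\mathsf{IP}_k$ denote the analogous relation on the tuple output by the $k$-th call to $\Prd$ (which has dimension $|\GGG|/2^k$). Define the ``cheating'' events: $E$ = ``some input triple is incorrect but $\mathsf{IP}_0$ holds''; $F_k$ = ``$\mathsf{IP}_{k-1}$ fails but $\mathsf{IP}_k$ holds'', for $k=1,\dots,R$; and $G$ = ``$\mathsf{IP}_R$ fails but $\Pvdot$ outputs $1$''. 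A short deterministic argument then shows containment: if none of $E,F_1,\dots,F_R,G$ occurs, then an incorrect triple forces $\mathsf{IP}_0$ to fail (by $\neg E$), hence $\mathsf{IP}_1,\dots,\mathsf{IP}_R$ all fail in turn (by $\neg F_1,\dots,\neg F_R$), hence $\Pvdot$ outputs $0$ (by $\neg G$), contradicting the assumption. Therefore the event of interest lies in $E\cup F_1\cup\dots\cup F_R\cup G$.

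Then I would bound each piece and apply a union bound. By Lemma~\ref{transl}, $\Pr[E]\le |\GGG|/2^d$; by Lemma~\ref{polyred}, $\Pr[F_k]\le 1/2^{d-1}$ for each $k$, noting that the bound there does not depend on the current dimension $|\GGG|/2^{k-1}$; and by Lemma~\ref{polyvv}, $\Pr[G]\le 1/2^d$. A point to handle carefully is that these events are not independent and that the challenges $r,\zeta_1,\dots,\zeta_R,\alpha$ are drawn adaptively across rounds; however, the union bound needs no independence, and each lemma already argues that its challenge is uniformly random and unpredictable to the adversary provided the preceding $\Prec$ invocations do not abort — the verifiable-reconstruction property of $\langle\cdot\rangle$ ensures a single malicious party can neither bias nor learn these challenges prematurely, and any detected inconsistency makes the relevant party abort (equivalently, output $0$). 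Summing, $\Pr[\Pmulv^R\text{ outputs }1 \ \wedge\ \exists i:\ z^{(i)}\neq x^{(i)}\cdot y^{(i)}]\le \frac{|\GGG|}{2^d}+\frac{R}{2^{d-1}}+\frac{1}{2^d}=\frac{|\GGG|+2R+1}{2^d}$.

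Finally I would close with the elementary inequality $|\GGG|+2R+1\le |\GGG|\cdot 2^{R+2}$, which holds for every $|\GGG|\ge 1$ and $R\ge 0$ since $|\GGG|\,(2^{R+2}-1)\ge 2^{R+2}-1\ge 4R+3\ge 2R+1$; this rewrites the bound as $\frac{|\GGG|}{2^{d-R-2}}$, as claimed. I expect the only genuinely delicate step to be the deterministic chaining — stating precisely why a single incorrect triple propagates to a failed relation at every stage unless one of the enumerated low-probability events fires — together with making sure each lemma is applied to exactly the tuple it takes as input, in particular verifying that the intermediate dimensions $|\GGG|/2^{k-1}$ play no role in the per-step soundness of $\Prd$.
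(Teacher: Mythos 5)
Your proposal is correct and follows essentially the same route as the paper: the paper likewise reduces the theorem to the three per-step soundness bounds (Lemma~\ref{transl} once, Lemma~\ref{polyred} for each of the $R$ reductions, Lemma~\ref{polyvv} once) and combines them, writing the failure probability as $1-\bigl(1-\tfrac{1}{2^{d-1}}\bigr)^R\bigl(1-\tfrac{|\GGG|}{2^d}\bigr)\bigl(1-\tfrac{1}{2^d}\bigr)\approx \tfrac{|\GGG|}{2^{d-R-2}}$. Your version is in fact a more careful rendering of the same argument: the explicit event chaining, the union bound (which needs no independence of the adaptively chosen challenges), and the exact inequality $|\GGG|+2R+1\le |\GGG|\cdot 2^{R+2}$ replace the paper's informal ``$R+2$ independent chances'' and its ``$\approx$'' step.
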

\begin{proof} 
 
From Lemma~\ref{transl}, Lemma~\ref{polyred}, and Lemma~\ref{polyvv}, we know that the adversary has $R$ chances with success probability $\frac{1}{2^{d-1}}$, one chance with probability $\frac{|\GGG|}{2^d}$, and one chance with probability $\frac{1}{2^d}$ to pass the verification.

Therefore, the total probability that the adversary succeeds is 
\[
1 - \left( 1 - \frac{1}{2^{d-1}} \right)^R \cdot \left( 1 - \frac{|\GGG|}{2^d} \right) \cdot \left( 1 - \frac{1}{2^d} \right) \approx \frac{|\GGG|}{2^{d - R - 2}}.
\]

 \end{proof}

    \myhalfbox{ Protocol $\Pbsv^R (\{\{\langle x^{(j)}_i \rangle,\langle y^{(j)}_i \rangle\}_{i \in \ZZ_{n_j}}, \langle z^{(j)} \rangle\}_{j \in \ZZ_{|\GGG|}})$}{white!20}{white!10}{
   	\emph{$\mathsf{Input:}$} ${|\GGG|}$ pairs of inner product. 
    
    \emph{$\mathsf{Output:}$} Output if  $z^{(j)} = \sum^n_{i = 1} x^{(j)}_i \cdot y^{(j)}_i$ held for all $j \in \ZZ_{|\GGG|}$. 

\underline{\textbf{Execution:}}
    \begin{itemize}[leftmargin=*]
    \item[-] All parties transfer all shares $\langle \cdot \rangle$ to $\langle \cdot \rangle^{\ell[x]}$ locally;
    \item[-] All parties invoke $\langle r \rangle^{\ell[x]} \leftarrow \Pshc^{\ell[x]}$ an call $\Prec$ to reconstruct $r \in \ZZ_{2^\ell}[x]$;
    \item[-] All parties set $\langle z \rangle^{\ell[x]} := \sum r^j \cdot \langle z^{(j)} \rangle^{\ell[x]}$ and $\langle x^{(j)}_i \rangle^{\ell[x]} := r^j \cdot \langle x^{(j)}_i \rangle^{\ell[x]}$ for each $i \in \ZZ_{n_j}, j \in \ZZ_{|\GGG|}$;
    \item[-]  All parties consolidate the original pairs into a single pair $\{\langle x^{(i)} \rangle^{\ell[x]},\langle y^{(i)} \rangle^{\ell[x]}\}_{i\in \ZZ_\mathcal{{|\GGG|}}}; \langle z \rangle^{\ell[x]}$ where $\mathcal{{|\GGG|}} = \sum^{{|\GGG|}-1}_{j = 0} n_j$;
    \item[-] For $k = 1,\ldots, R$, all parties do:
       \begin{itemize}
   	\item $\{\langle x^{(i)} \rangle^{\ell[x]},\langle y^{(i)} \rangle^{\ell[x]}\}_{i \in \ZZ_{\mathcal{{|\GGG|}}/{2^k}}},\langle z \rangle^{\ell[x]} \leftarrow \Prd (\{\langle x_i \rangle^{\ell[x]},\langle y^{(i)} \rangle^{\ell[x]}\}_{i \in \ZZ_{\mathcal{{|\GGG|}}/2^{k-1}}}, \langle z \rangle^{\ell[x]})$;
	\end{itemize}
	\item[-] All parties call $b = \Pvdot (\{\langle x^{(i)} \rangle^{\ell[x]},\langle y^{(i)} \rangle^{\ell[x]}\}_{i \in \ZZ_{\mathcal{{|\GGG|}}/{2^R}}}, \langle z \rangle^{\ell[x]})$;
	\item[-] All parties output $b$.
    \end{itemize}

    \smallskip
}{ The Batch Inner Product Verification Protocol \label{fig:PBDP}}   
\section{Enhancing PPML.}\label{apparithmetic} 
In this section, we implement a maliciously secure privacy-preserving machine learning framework. We use boolean share to evaluate nonlinear functions, which can be viewed as share over ring $\ZZ_2$. We realize the share conversion protocol, which is entirely based on maliciously secure multiplication $\Pmul$. This makes our framework merely reliant on $\Pmul$.
\subsection{Dealing with linear operation.} Our maliciously secure multiplication protocol is shown in Fig.~\ref{fig:MultP}. $\Pmul$ ensures the correctness of multiplication by invoking batch verification protocol $\Pmulv$ in the post-processing phase. When handling a substantial volume of data, our protocol exhibits an amortized communication of $\ell$ bits in the preprocessing phase and $2 \ell$ bits in the online phase for each multiplication operation. The multiplication protocol can be expanded to the inner product protocol.  Our maliciously secure inner product protocol $\Pdot$ is shown in Fig.~\ref{fig:DPP}. Its semi-honest version is the special case of $\Pple$ for $2$-degree $n$-variate polynomial, which requires one round communication of $\ell$ bits in the preprocessing phase and one round communication of $2\ell$ bits in the online phase. To extend it to the malicious setting, we employ batch verification protocol $\Pvdot^R$ (Fig.~\ref{fig:PBDP}) to ensure the correctness of the inner products with a similar manner of multiplication. Analogously, in $\Pvdot^R$, all parties transform the verification of inner product triples over ring $\ZZ_{2^\ell}$ to the verification of a single inner product triple over the polynomial ring $\ZZ_{2^\ell}[x]/f(x)$. Following that, all parties invoke $\Prd$ to reduce the dimension of the vector that needs to be verified. 
When handling a substantial volume of data, on average, our protocol exhibits an amortized communication of $\ell$ bits in the preprocessing phase and $2 \ell$ bits in the online phase for each inner product operation.
In the application of machine learning, we view the $m$-dimensional output convolution and matrix multiplication as $m$ separate inner products. We implement these two types of operations by invoking $\Pdot$ a total of $m$ times.  

\myhalfbox{ Protocol $\Ptrunc^t(\langle x \rangle)$}{white!40}{white!10}{
    Let $\mathsf{rshift}(x, y)$ denote right shift $x$ with $y$ bits.
    $\mathsf{Input:}$ $\langle \cdot \rangle$-shared value. 
    
    $\mathsf{Output:}$ $\langle \cdot \rangle$-shared value of $z = \mathsf{rshift}(x, t)$. 
    
    \underline{\textbf{Preprocessing:}}
    \begin{itemize}[leftmargin=*]
    \item[-] $P_0$ and $P_i$ pick random bit list $\{b_{i,j}\}_{j \in Z_\ell} \leftarrow \ZZ^\ell_2$ together, for $i\in \{1,2\}$;
    \item[-] All parties set 
    \begin{itemize}
    \item[-] $\langle b_{1,j} \rangle := (m_{b_{1,j}}, [r_{b_{1,j}}]_1, [r_{b_{1,j}}]_2) := (0, b_{1,j}, 0)$;
    \item[-] $\langle b_{2,j} \rangle:= (m_{b_{2,j}}, [r_{b_{2,j}}]_1, [r_{b_{2,j}}]_2) := (0, 0, b_{2,j})$ for $j \in \ZZ_{\ell}$;
    \end{itemize}
    \item[-] All parties invoke $\Pdot$ to calculate
    \begin{itemize}
    \item$\langle r_x\rangle = \sum^{\ell -1 }_{j = 0} 2^j (\langle b_{1,j} \rangle + \langle b_{2,j} \rangle - 2\langle b_{1,j} \rangle \cdot \langle b_{2,j} \rangle)$;
    \item$\langle r_z\rangle= \sum^{\ell - t -1 }_{j = 0} 2^j (\langle b_{1, j+t} \rangle + \langle b_{2, j+t} \rangle - 2\langle b_{1, j+t} \rangle \cdot \langle b_{2, j+t} \rangle)+\sum^{\ell - 1 }_{j = \ell - t -1} 2^j (\langle b_{1, \ell - 1} \rangle + \langle b_{2, \ell - 1} \rangle - 2\langle b_{1, \ell - 1} \rangle \cdot \langle b_{2, \ell - 1} \rangle)$;
    
   	\end{itemize}
	\item[-] $P_0$ set $r_x = \sum^{\ell -1 }_{j = 0} 2^j \cdot (b_{1,j} \oplus b_{2,j})$, $r_z= \sum^{\ell - t -1 }_{j = 0} 2^j \cdot (b_{1,j} \oplus b_{2,j}) +\sum^{\ell - 1 }_{j = \ell - t -1} 2^j \cdot (b_{1,\ell - 1} \oplus b_{2,\ell - 1})$;
	
	\item[-] $P_i$ for $i \in \{1,2\}$ set $[r_x] = m_{r_x}  - [r_{r_x}] $, $[r_z] = m_{r_z}  - [r_{r_z}] $;
	\end{itemize}
	\smallskip
	\underline{\textbf{Online:}}
	
	 \begin{itemize}[leftmargin=*]        
	 \item[-] $P_i$ for $i \in \{1,2\}$ set $m_z = \mathsf{rshift}(m_x, t)$;
	 \item[-] All parties output $\langle z \rangle := ([r_z], m_z)$.
        \end{itemize}

    }{ The maliciously secure truncation protocol \label{fig:trunc}}
\subsection{Secure Truncation Protocol.} The multiplication of two fixed-point values with our encoding will lead to a double scale of $2^k$ for the fractional precision $k$. An array of protocols \cite{aby3, BLAZE,swift} using the probabilistic truncation protocol to reduce the additional $2^k$ scaler. Their protocols introduce a one-bit error which is caused by the carry bit of truncated data. In addition, the probabilistic truncation protocol makes an error with a certain probability (assuming that the valid range of data is $\ell_x$ and the error probability is $2^{\ell_x - \ell + 1}$). As shown in Fig.~\ref{fig:trunc}, we also design a maliciously secure probabilistic truncation protocol $\Ptrunc^t$ for the truncation bit size $t$. Our idea is similar to SWIFT \cite{swift}, which generates correct truncation pair via maliciously secure inner product protocol. However, in contrast to SWIFT\cite{swift}, we directly generate $r_z = \mathsf{rshift}(r_x, d)$, which allows the parties locally truncate $m_z = \mathsf{rshift}(m_x, d)$ in the online phase without communication. Although SWIFT\cite{swift} eliminates communication by combining truncation with multiplication, they still need $2\ell$ online communication in the online phase of the standalone truncation protocol.
\myhalfbox{ Protocol $\Pdabits$}{white!40}{white!10}{
    
    $\mathsf{Input:}$ None. 
    
    $\mathsf{Output:}$ edaBits pair $\{\langle r \rangle^\ell, \{\langle r[i] \rangle^1\}_{i \in [\ell]}\}$. 
    
    \underline{\textbf{Execution:}}
    \begin{itemize}[leftmargin=*]
    \item[-] $P_0$ and $P_j$ pick random bit list $\{b_{i,j}\}_{i \in [\ell]} \leftarrow (\ZZ_2)^\ell$ together, for $i\in \{1,2\}$;
    \item[-] $P_1$ and $P_2$ pick random bit list $\{m_{i}\}_{i \in [\ell]} \leftarrow (\ZZ_2)^\ell$ together, for $i\in \{1,2\}$;
    \item[-] All parties set $\langle r[i] \rangle^1 := (m_i, b_{i, 1}, b_{i, 2})$;
    \item[-] All parties set 
    \begin{itemize}
    \item[-] $\langle b_{i,1} \rangle^\ell := (0, b_{i,1}, 0)$;
    \item[-] $\langle b_{i,2} \rangle^\ell:= (0, 0, b_{i,2})$;
    \item[-] $\langle m_i \rangle^\ell:= (m_i, 0, 0)$ for $i \in [\ell]$;
    \end{itemize}
    \item[-] All parties invoke $\Pmul$ to calculate
    \begin{itemize}
    \item $\langle r'[i]\rangle^\ell =  \langle b_{i,1} \rangle^\ell + \langle b_{i,2} \rangle^\ell - 2\langle b_{i,1} \rangle^\ell \cdot \langle b_{i,2} \rangle^\ell$ for $i \in [\ell]$;
    \item $\langle r\rangle^\ell =  \sum^{\ell - 1}_{i = 0} 2^i \cdot (\langle m_i \rangle^\ell + \langle r'[i] \rangle^\ell - 2\langle m_i \rangle^\ell \cdot \langle r'[i] \rangle^\ell)$
    
   	\end{itemize}
	\item[-] All parties output $\{\langle r \rangle^\ell, \{\langle r[i] \rangle^1\}_{i \in [\ell]}\}$
	\end{itemize}

    }{ The maliciously edaBits generation\label{fig:edabits}}
Specifically, we let $P_0$ and $P_1$ pick random bit list $\{b_{1,j}\}_{j \in Z_\ell}$ together; $P_0$ and $P_2$ pick random bit list $\{b_{2,j}\}_{j \in Z_\ell}$ together. We utilize these lists to calculate that $r_x = \sum^{\ell -1 }_{j = 0} 2^j \cdot (b_{1,j} \oplus b_{2,j})$ and $r_z= \sum^{\ell - t -1 }_{j = 0} 2^j \cdot (b_{1,j} \oplus b_{2,j}) +\sum^{\ell - 1 }_{j = \ell - t -1} 2^j \cdot (b_{1,\ell - 1} \oplus b_{2,\ell - 1})$ which keeps the relationship $r_z = \mathsf{shift}(r_x, t)$. We can evaluate $r_x$ and $r_z$ under $\langle \cdot \rangle$-sharing to realize malicious security. To transform $b_{1,j}$ and $b_{2,j}$ to the $\langle \cdot \rangle$-sharing locally, we let $\langle b_{1,j} \rangle = (0, b_{1,j}, 0)$ and $\langle b_{2,j} \rangle = (0, 0, b_{2,j})$ which set the other secret share to be $0$. For the result $\langle r_x \rangle$ and $\langle r_z \rangle$, since $r_x$ and $r_z$ is known by $P_0$, $P_1$ and $P_2$ can be locally calculate $[r_x] = m_{r_x}  - [r_{r_x}] $ and $[r_z] = m_{r_z}  - [r_{r_z}]$. Note that $\Ptrunc$ requires assigning $r_x$ of the input wire, we let it be executed preferentially to provide $r_x$ for the other gate.
Our maliciously secure protocol $\Ptrunc$ requires $1$ rounds and communication of $6\ell$ bits in the offline phase and requires no communication in the online phase.

\subsection{Secure Non-linear Function Evaluation.} We utilize standard daBits~\cite{edabits, dabits} to evaluate non-linear functions. DaBits convert arithmetic secret sharing into Boolean secret sharing, allowing us to use Boolean shares for circuit evaluation corresponding to the non-linear function. In our protocol, a daBit consists of a pair ${\langle r \rangle^\ell, \langle r \rangle^1}$, where $r \in {0,1}$. As a multi-bit version, edaBits~\cite{edabits} is a pair $\{\langle r \rangle^\ell, {\langle r[i] \rangle^1}_{i \in [\ell]}\}$, where $r \in \mathbb{Z}{2^\ell}$ is an arithmetic share, and ${\langle r[i] \rangle^1}_{i \in [\ell]}$ are Boolean shares resulting from bit extraction. We demonstrate that our maliciously secure 3PC protocol is fully compatible with edaBits.

EdaBits allow bit extraction from shares over $\mathbb{Z}_{2^\ell}$ into multiple shares over $\mathbb{Z}_2$. A series of works~\cite{aby3} use full adders to perform this conversion. Specifically, let $\langle x \rangle^\ell$ be the share requiring bit extraction and $\{\langle r \rangle^\ell, \{\langle r[i] \rangle^1\}_{i \in [\ell]}\}$ be the edaBits. All parties first reconstruct $\Delta = x - r$ by locally computing $\langle \Delta \rangle^\ell = \langle x \rangle^\ell - \langle r \rangle^\ell$. Subsequently, using $\{\langle r[i] \rangle^1\}_{i \in [\ell]}$ and $\{\langle \Delta[i] \rangle^1\}_{i \in [\ell]}$, the parties perform a full-adder circuit to obtain $\{\langle x[i] \rangle^1\}_{i \in [\ell]}$. Using $x[i]$, each party can then evaluate arbitrary non-linear functions. Note that the soundness of our malicious 3PC protocol is independent of the ring size $\ell$, and it works for $\ell = 1$ as well.

For the share conversion from $\langle x \rangle^1$ to $\langle x \rangle^\ell$, this can be achieved through multiplication. Given the edaBits $\{\langle r \rangle^\ell, \langle r \rangle^1\}$, all parties first reconstruct $\Delta = \langle x \rangle^1 - \langle r \rangle^1$ over $\mathbb{Z}_2$. Then, they compute $\langle x \rangle^\ell = \Delta + \langle r \rangle^\ell - 2\Delta \cdot \langle r \rangle^\ell$. This works because $\Delta \oplus r = \Delta + r - 2 \Delta \cdot r$.

The edaBits $\{\langle r \rangle^\ell, \{\langle r[i] \rangle^1\}_{i \in [\ell]}\}$ can be constructed from multiple daBits $\{\langle r[i] \rangle^\ell, \{\langle r[i] \rangle^1\}_{i \in [\ell]}\}$ by computing $\langle r \rangle^\ell = \sum_{i=0}^{\ell-1} 2^i \cdot \langle r[i] \rangle^\ell$.

To generate daBits $\{\langle r \rangle^\ell, \langle r \rangle^1\}$, parties $P_0$ and $P_1$ jointly pick a random value $r_1 \in \{0,1\}$, $P_0$ and $P_2$ pick $r_2 \in \{0,1\}$, and $P_1$ and $P_2$ pick $r_3 \in \{0,1\}$. All parties then set $\langle r_1 \rangle^\ell := (0, r_1, 0)$, $\langle r_2 \rangle^\ell := (0, 0, r_2)$, and $\langle r_3 \rangle^\ell := (r_3, 0, 0)$. The parties set $\langle r \rangle^1 := (r_3, r_1, r_2)$, and compute $\langle r' \rangle^\ell = \langle r_1 \rangle^\ell + \langle r_2 \rangle^\ell - 2\langle r_1 \rangle^\ell \cdot \langle r_2 \rangle^\ell$. Finally, they compute $\langle r \rangle^\ell = \langle r' \rangle^\ell + \langle r_3 \rangle^\ell - 2\langle r' \rangle^\ell \cdot \langle r_3 \rangle^\ell$. 

Note that $\langle r \rangle^\ell$ requires two rounds of multiplication. Considering edaBits $\{\langle r \rangle^\ell, \{\langle r[i] \rangle^1\}_{i \in [\ell]}\}$, the second multiplication can be combined with a single inner product. Fig.~\ref{fig:edabits} depicts the generation of edaBits. 

Our PPML framework is constructed by multiplication over $\ZZ_{2^\ell}$ and $\ZZ_2$, which can be verified by perform $\Pmulv$ on $\ZZ_{2^\ell}$ and $\ZZ_2$ respectively.

\begin{figure}[tbp]%
  \centering
\begin{subfigure}[t]{.48\linewidth}
\includegraphics[width=1\textwidth]{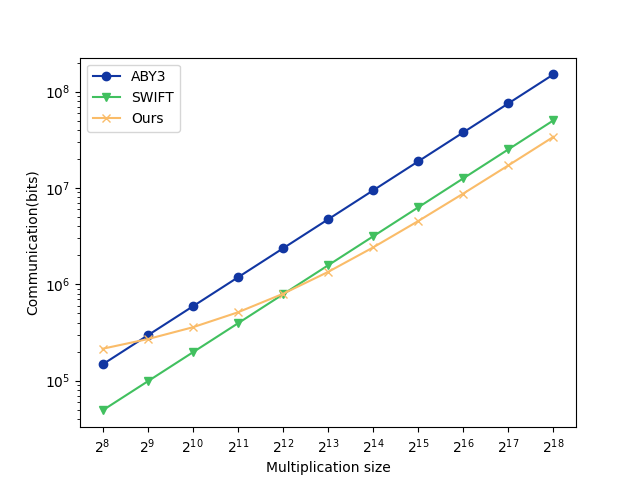}
\caption{Online com. of MUL}
\end{subfigure}
\begin{subfigure}[t]{.48\linewidth}
\includegraphics[width=1\textwidth]{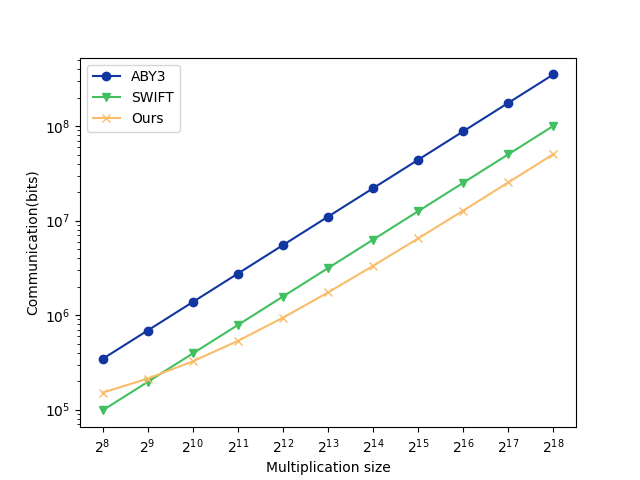}
  \caption{Total com. of MUL}
 \end{subfigure}
\begin{subfigure}[t]{.48\linewidth}
\includegraphics[width=1\textwidth]{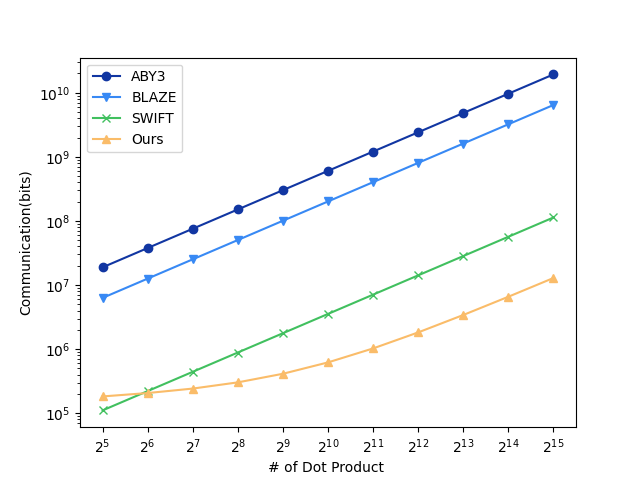}
\caption{Online communication of Inner Product with Trunction}
\end{subfigure}
\begin{subfigure}[t]{.48\linewidth}
\includegraphics[width=1\textwidth]{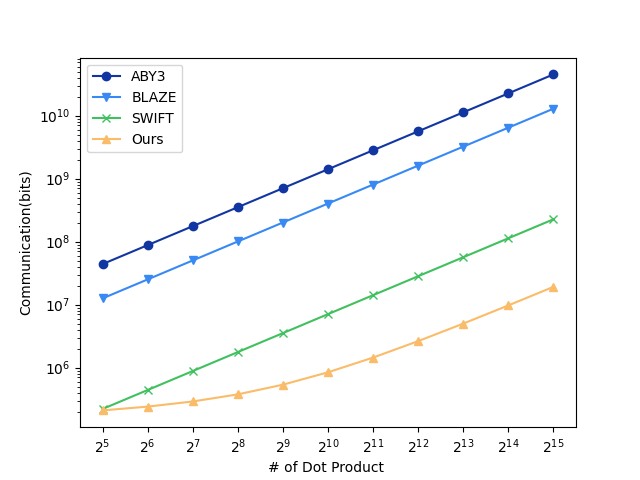}
  \caption{Overall communication of Inner Product with Trunction}
 \end{subfigure}
 
  \caption{ Communication overhead comparison with ABY3\cite{aby3}, BLAZE\cite{BLAZE}, SWIFT\cite{swift} of muliplication and inner product.}%
  \label{fig:com_dot}%
  \vspace{-1em}
\end{figure}

\section{Implementation and Benchmarks}\label{imp_ben}

\noindent In this section, we evaluate our multiplication and non-linear protocols in both the semi-honest and malicious settings. For the maliciously secure multiplication protocols, we compare the communication and runtime with SWIFT\cite{swift} and ABY\cite{aby3}. 

\smallskip
\noindent \textbf{Benchmark setting.}  We perform our arithmetic protocols on the GPU setting. To support GPU, our code is based on the Piranha~\cite{piranha} source code \cite{Piranhacode}, which is a GPU platform for MPC protocols. For the non-linear protocols, we implement both CPU and GPU versions to support benchmarking with  FSS~\cite{DCF} and garble circuit-based protocol BLAZE~\cite{BLAZE} on CPU setting.   The modified version of Piranha (GPU-version)~\cite{modified}, and the CPU version of our code~\cite{ourcode} are available in Anonymize Github.
In our benchmark setting, we take the size of the ring $\ell = 64$ and the polynomial ring degree $d = 64$. For the fixed-point value, we utilize $16$ bits truncation. Our experiments are performed in a local area network, using software to simulate three network settings: local-area network (LAN, RTT: 0.2ms, bandwidth: 1Gbps), metropolitan-area network (MAN, RTT: 12ms, bandwidth: 100Mbps), and wide-area network (WAN, RTT: 80ms, bandwidth: 40Mbps) and executed on a desktop with AMD Ryzen 7 5700X CPU @ 3.4 GHz running Ubuntu 18.04.2 LTS; with 8 CPUs, 32 GB Memory, $4 \times$ Nvidia 2080 Ti with 11 GB RAM and 1TB SSD.

\begin{figure}[tpb]%
  \centering

\begin{subfigure}[t]{.49\linewidth}
\includegraphics[width=1.05\textwidth]{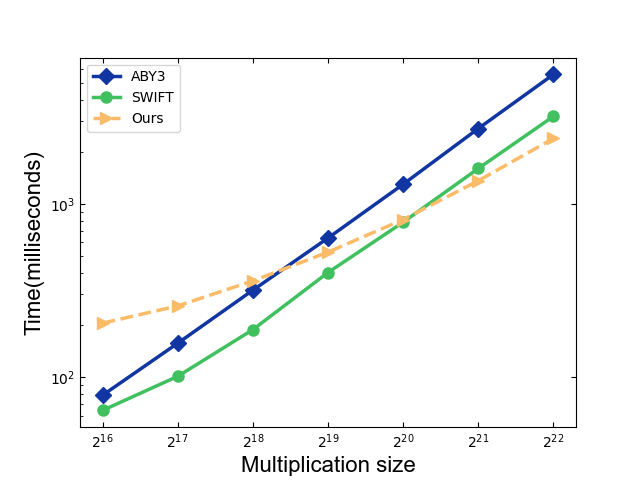}
  \caption{MAN}
 \end{subfigure}
\begin{subfigure}[t]{.49\linewidth}
\includegraphics[width=1.05\textwidth]{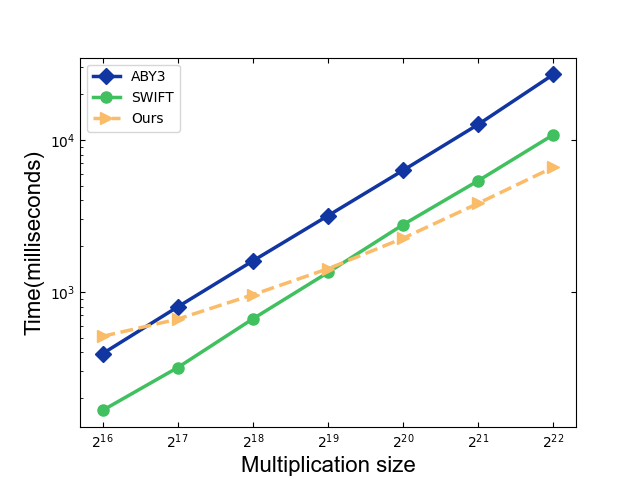}
\caption{WAN}
 \end{subfigure}

  \caption{ Overall running time of multiplication (over the GPU setting). Compared with ABY3\cite{aby3},  SWIFT\cite{swift} of $\Pmul$ over MAN and WAN setting.}%
  
  \label{fig:multrun}%

\vspace{-1em}
\end{figure}

\begin{figure}[tbp]%
  \centering
\begin{subfigure}[t]{.49\linewidth}
\includegraphics[width=1.05\textwidth]{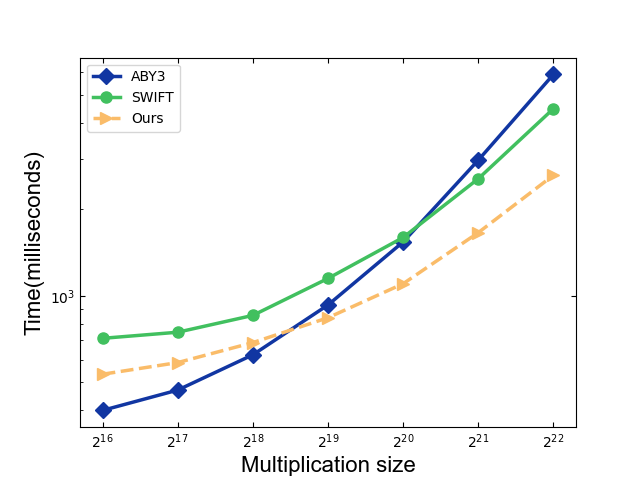}
\caption{32 Depth}
\end{subfigure}
\begin{subfigure}[t]{.49\linewidth}
\includegraphics[width=1.05\textwidth]{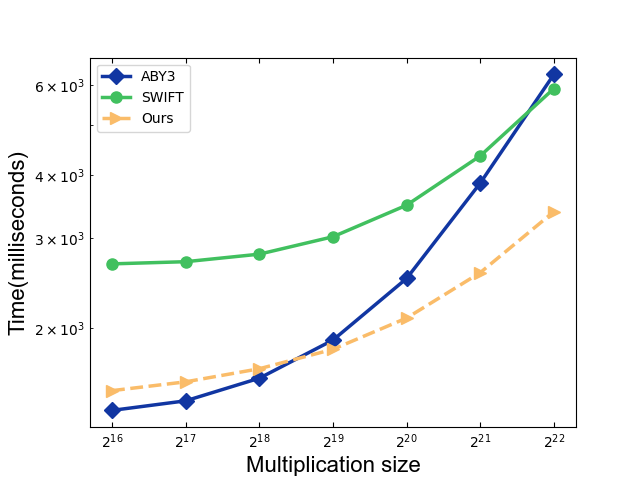}
  \caption{128 Depth}
 \end{subfigure}

  \caption{ Evaluate the multiplication (over the GPU setting) with circuit depth $32$ and $128$ under the MAN setting.}%
  \label{fig:dep}%
  \vspace{-1em}
\end{figure}

\subsection{Multiplication performance comparison}\label{sec:mulcomm}
In this section, we benchmark our maliciously secure 3PC of multiplication.

\smallskip
\noindent \emph{\underline{Trade-off of the repetition parameter $R$.}} While selecting a larger value for the repetition parameter $R$ for dimension reduction can minimize the communication volume in batch verification, it is also essential to consider the impact of additional communication rounds in the postprocessing phase for overall performance.  We conduct a practical experimental benchmark to determine the optimal value of $R$ in different bandwidth and delay scenarios. 
Fig.~\ref{fig:trade-off1} depicts the verification time with the different dimension reduction number $R$. The running time is measured in milliseconds on the y-axis, while the x-axis shows the number of dimension reductions $R$, ranging from 0 to 10. It points out the optimal $R$ value ($R=7$ in MAN, with data size $2^{18}$; $R=9$ in MAN, with data size $2^{20}$; $R=8$ in WAN, with data size $2^{18}$;$R=10$ in WAN, with data size $2^{20}$;). Our benchmark indicates that the larger $R$ needs to be chosen for smaller bandwidths and larger data dimensions.

\smallskip
\noindent\textbf{Communication.}
 Fig.~\ref{fig:com_dot} presents the communication overhead of our protocol compared to ABY, BLAZE, and SWIFT. For this evaluation, we consider a vector dimension of $1024$ when computing the inner product. Our protocol incurs a logarithmic additional communication cost of $(6R + 5)\ell \cdot d$, where $R = \log |\GGG|$. As a result, when $|\GGG|$ is small, our protocol requires more communication than SWIFT.
However, as $|\GGG|$ increases, the logarithmic term $R$ diminishes in relative significance, making the additional overhead negligible for large $|\GGG|$. In scenarios with large input sizes, the communication volume of our protocol for multiplication is approximately $50\%$ that of SWIFT and $15\%$ that of ABY. For the inner product computation in 1024 dimensions with truncation, the communication overhead is about $50\%$ that of SWIFT and $0.02\%$ that of ABY.
This demonstrates that while our protocol may initially have higher communication costs for smaller input sizes, it becomes significantly more efficient as the input size increases. Specifically, it provides notable savings in communication compared to both SWIFT and ABY in large-scale computations, particularly in higher-dimensional inner product operations.

\smallskip
\noindent\textbf{Performance.}
We compare our maliciously secure multiplication protocol with state-of-the-art (SOTA) protocols in Fig.~\ref{fig:multrun}, including SWIFT and ABY. To assess the running time, we execute our protocol across multiple values of $R$ (the dimension reduction factor), selecting the configuration that yields the best performance. Due to the inclusion of an additional verification round—especially in scenarios with a small volume of data—our protocol experiences a performance overhead, making it slower than SWIFT and ABY in these cases. This verification round is the dominant source of overhead when the data size is small.
However, when considering larger, saturated datasets, our protocol demonstrates a significant performance improvement, achieving up to $2\times$ the throughput of both SWIFT and ABY under both MAN (Metropolitan Area Network) and WAN (Wide Area Network) settings.

Furthermore, we investigate the effect of multiplication depth on protocol performance. Fig.~\ref{fig:dep} illustrates how performance varies with different multiplication depths. We benchmark our protocol against others on multiplication circuits with depths of 32 and 128. Due to the ability of our protocol and ABY to leverage batch verification, they exhibit a round complexity advantage, resulting in better performance compared to the SWIFT protocol, especially when the multiplication depth is large. This round reduction becomes particularly beneficial as circuit depth increases, allowing our protocol to outperform SWIFT in deep circuit scenarios.

In summary, while our protocol may incur some overhead in low-volume, low-depth scenarios due to the additional verification step, it scales efficiently with larger datasets and deeper circuits, making it highly competitive against SOTA protocols such as SWIFT and ABY in high-volume, high-depth applications.

\begin{figure}[tpb]%
  \centering

\begin{subfigure}[t]{.49\linewidth}
\includegraphics[width=1.05\textwidth]{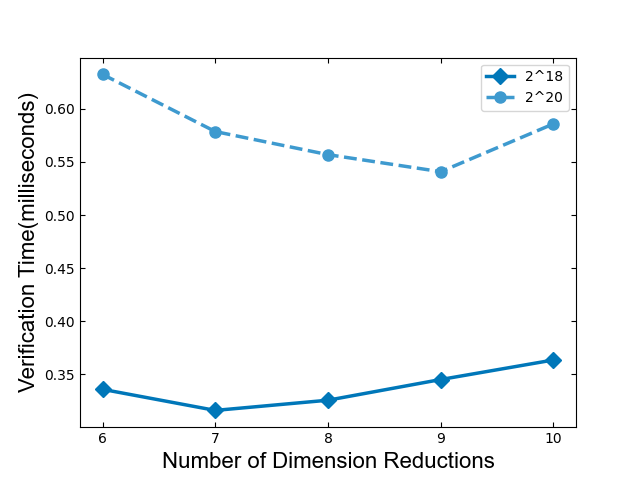}
  \caption{MAN}
 \end{subfigure}
\begin{subfigure}[t]{.49\linewidth}
\includegraphics[width=1.05\textwidth]{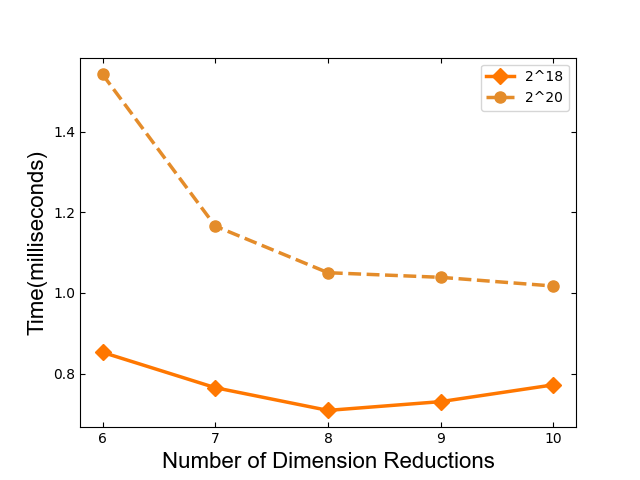}
\caption{WAN}
 \end{subfigure}

  \caption{ The running time of verification phase (over the GPU setting), with the different dimension reduction number $R$, multiplication triple size $2^{18}$ and $2^{20}$, over MAN and WAN setting.}%
  
  \label{fig:trade-off1}%

\end{figure}
\begin{figure*}
\includegraphics[width=1.05\textwidth]{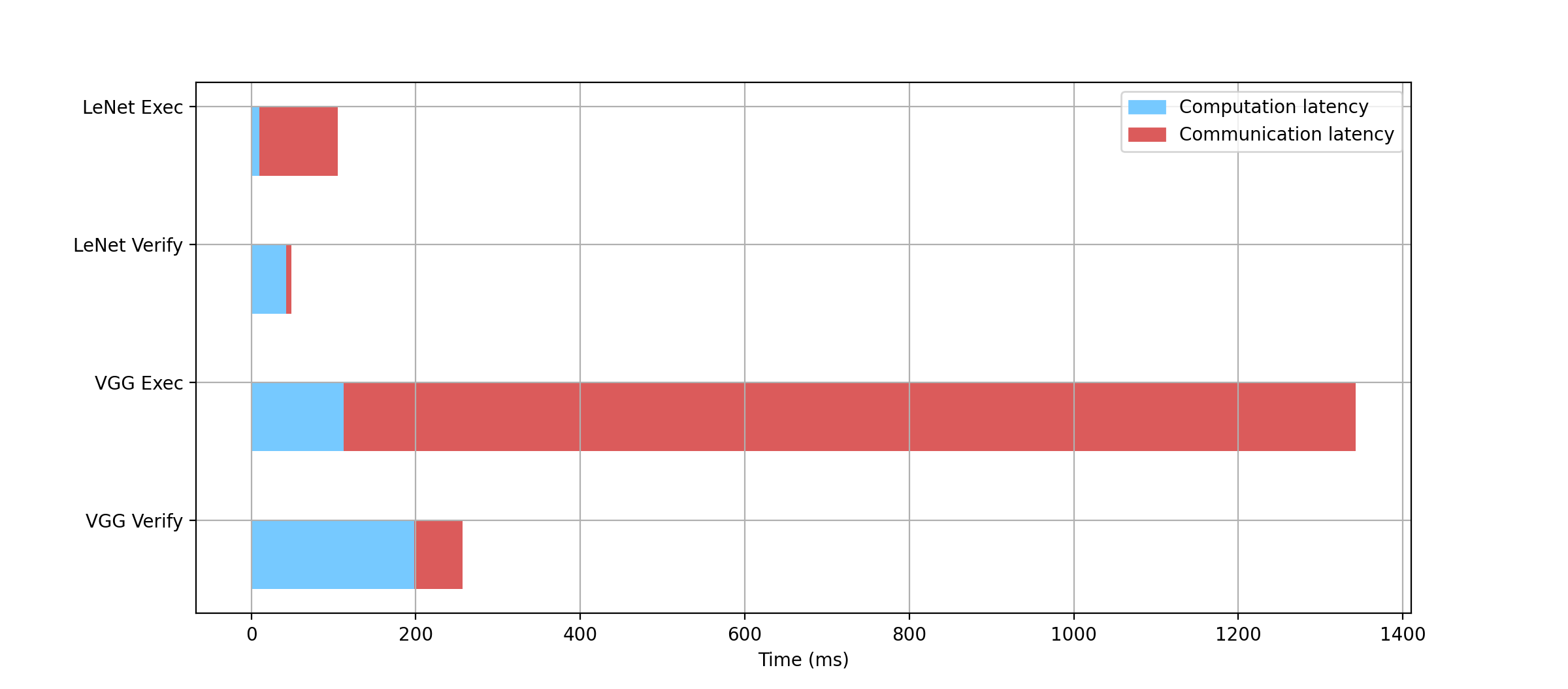}
\caption{Computation latency and communication latency diagram for model inference in LAN setting. Exec corresponds to the execution phase (both offline and online), and Verify corresponds to the verification phase. }
\label{fig:latency}
 \end{figure*}
\subsection{The inference of neural network.}\label{overall}
\begin{table}
\caption{Run-time and communication cost of NN inference, under LAN setting with batch size $30$. (Com: the communication which is given in MB. Time: the run-time which is given in ms)}

\label{table:nn}
\centering

\resizebox{1\columnwidth}{!}{
\begin{tabular}{c|c|cc|ccc}
\toprule
\multirow{2}{*}{Model} & \multirow{2}{*}{Stage} & \multicolumn{2}{c|}{Offline}    & \multicolumn{3}{c}{Online}                                   \\ \cmidrule(r){3-7} 
                       &                           & \multicolumn{1}{c|}{Com} & Time & \multicolumn{1}{c|}{Com} & \multicolumn{1}{c|}{Round} & Time\\ \midrule
\multirow{2}{*}{S-NN}      &       Execution          & \multicolumn{1}{c|}{0.05}    & 6.07 & \multicolumn{1}{c|}{0.17}    & \multicolumn{1}{c|}{2}      &  13.19 \\ \cmidrule(r){2-7} 
                       &       Verification       & \multicolumn{1}{c|}{-}    &   -   & \multicolumn{1}{c|}{1.75}    & \multicolumn{1}{c|}{3}      &   23.52\\ \midrule
\multirow{2}{*}{LeNet}      &     Execution         & \multicolumn{1}{c|}{0.65}    &  7.40  & \multicolumn{1}{c|}{2.46}    & \multicolumn{1}{c|}{42}      &   104.9  \\ \cmidrule(r){2-7} 
                       &         Verification         & \multicolumn{1}{c|}{-}    &   -   & \multicolumn{1}{c|}{26.1}    & \multicolumn{1}{c|}{3}      &  48.2 \\ \midrule
\multirow{2}{*}{VGG}      &   Execution      & \multicolumn{1}{c|}{10.2}    &207& \multicolumn{1}{c|}{39.2}    & \multicolumn{1}{c|}{127}      &  1341    \\ \cmidrule(r){2-7} 
                       &Verification & \multicolumn{1}{c|}{-}    &   -   & \multicolumn{1}{c|}{414}    & \multicolumn{1}{c|}{6}      &     257 \\ \bottomrule

\end{tabular}}

\end{table}
We further construct the convolutional neural network (CNN) inference. We implement three types of models as follows:

\begin{itemize}

\item Shallow neural network(S-NN). Our shallow neural network accepts $28 \times 28$ image and involves a convolution layer(5 kernels with $5 \times 5$ shape, the stride of (2,2)), a ReLU layer, and a fully connected layer(connects the incoming $5 \times 13 \times 13$ nodes to the output $10$ nodes).
\item LeNet. We benchmark the LeNet model, which replaces the sigmoid activation layer with the ReLU layer. The model accepts $32 \times 32$ image and contains 2-layer convolution, 2-layer Maxpool, 4-layer ReLU, and 3-layer full connection.
\item VGG-16. We benchmark the VGG-16 model, which takes $64 \times 64$ image as input and contains 13-layer convolution, 5-layer maxpool, 13-layer ReLU, and 8-layer full connection.

\end{itemize}
TABLE~\ref{table:nn} presents the performance metrics of our protocol across different stages for three models: S-NN, LeNet, and VGG. The metrics are divided into communication (Com) and time (Time).  During the execution stage, all parties engage in both the offline and online phases of the semi-honest protocol, ensuring the necessary computations are securely executed. 
The communication costs rise significantly with model complexity, especially in the execution phase. 
In the verification stage, a post-processing step is performed by all parties to validate the correctness of the computed and shared results.
Our platform demonstrates the capability to execute CNNs, such as LeNet, in mere hundreds of milliseconds. For deeper convolutional networks, like VGG, the platform completes the execution within seconds, showcasing its efficiency even with more complex neural network models. As observed in the provided table, when the model is small, such as in the case of the S-NN model, the verification phase takes a significant proportion of the total runtime compared to the online execution phase. However, as the model complexity increases, such as with LeNet and VGG, the runtime required for the online execution phase grows substantially while the proportion of the verification phase decreases. In summary, for small models, verification is a major contributor to total runtime. However, as the model size increases, the overhead of verification decreases in proportion and becomes less impactful, with the online phase becoming the dominant factor in overall runtime.

Fig.~\ref{fig:latency}  illustrates the computation and communication latencies during the model inference process. The figure details the latencies for two models, LeNet and VGG, across the execution phase (Exec) and the verification phase (Verify).
The results highlight that, during the execution phase, communication latency far exceeds computation latency. This effect is particularly pronounced for LeNet, where communication latency dominates the entire inference process, underscoring the significant impact of network communication on overall performance. In contrast, during the verification phase, computation latency becomes the primary factor contributing to the total latency.
As the complexity of the model increases, as seen when comparing VGG to LeNet, the communication overhead becomes the main bottleneck in the online phase, severely impacting inference efficiency. Meanwhile, in the verification phase, the computation overhead takes precedence, becoming the dominant factor that limits overall performance. These findings emphasize the need for optimizing both communication and computation aspects, particularly as model complexity increases, to improve inference efficiency across all stages.

\section{related work}

\cite{FastRing} achieves a communication overhead of $1 \frac{1}{3}$ ring elements with two rounds of communication or $1 \frac{2}{3}$ ring elements with one round of communication.
With the advancement of the maliciously secure multiplication protocol, practical maliciously secure privacy-preserving machine learning becomes attainable. 
\cite{SecureML,ASTRA,aby3, ASTRA, FALCON, Trident, FLASH, BLAZE, swift} realize privacy-preserving machine learning protocols under the malicious threat model in an honest majority.
In the semi-honest setting, protocols such as \cite{aby3, ASTRA, aby2, BLAZE} are all based on three parties replicated secret sharing, which only request $3$ ring elements communication each multiplication. 
The online phase communication overhead of $2$ ring elements can be achieved by handing over part of the communication to a circuit-dependent offline phase~\cite{ASTRA}.
For malicious security, 3PC and 4PC protocols with an honest majority, such as ASTRA~\cite{ASTRA}, SWIFT~\cite{swift}, and others~\cite{Fantastic, Tetrad, aby3, BLAZE} have been developed for private machine learning and advanced applications. These protocols combine function-dependent preprocessing and mixed-protocol strategies, providing malicious security while optimizing performance. 
In particular, a series of optimizations \cite{ASTRA, BLAZE, swift} reduced the multiplication overhead to 6 ring elements ($3$ in the offline phase) in the three-party setting.
Other notable contributions include efforts to reduce the cost of garbled circuits against a malicious evaluator~\cite{Pragmatic} and improving communication efficiency in settings with a dishonest majority and semi-honest helper party~\cite{Asterisk}. Additionally, protocols like MUSE~\cite{Muse} and follow-up works~\cite{SIMC,Fusion} explored fixed-corruption scenarios in 2PC, enhancing malicious security while reducing performance overhead.

\section{Conclusion}
In this work, we design a 3PC maliciously secure multiplication protocol over the ring. We adopt our protocol for the machine learning model evaluation and perform the model evaluation on the GPU platform. The experiments show that our various protocols have significant performance improvements over the state-of-the-art works, i.e., \cite{swift, aby3}.



\bibliographystyle{IEEEtran}
\bibliography{ref}

%
%
%
%
%
%
%
%
%
%
\vspace{22pt}

\end{document}